\newtheorem{definition}{Definition}
\newtheorem{lemma}[definition]{Lemma}
\newtheorem{theorem}[definition]{Theorem}
\def\squareforqed{\hbox{\rlap{$\sqcap$}$\sqcup$}}
\def\qed{\ifmmode\squareforqed\else{\unskip\nobreak\hfil
\penalty50\hskip1em\null\nobreak\hfil\squareforqed
\parfillskip=0pt\finalhyphendemerits=0\endgraf}\fi}
\def\endenv{\ifmmode\;\else{\unskip\nobreak\hfil
\penalty50\hskip1em\null\nobreak\hfil\;
\parfillskip=0pt\finalhyphendemerits=0\endgraf}\fi}
\newenvironment{proof+}[1]{\noindent\hspace{2em}\textit{{Proof #1:~} }}{\hfill\QED}
\newenvironment{remark}{\noindent \textbf{{Remark~}}}{}
\mathchardef\ordinarycolon\mathcode`\:
\def\vcentcolon{\mathrel{\mathop\ordinarycolon}}
\newcommand{\nc}{\newcommand}
\nc{\rnc}{\renewcommand}
\nc{\beq}{\begin{equation}}
\nc{\eeq}{{\end{equation}}}
\nc{\beqa}{\begin{eqnarray}}
\nc{\eeqa}{\end{eqnarray}}
\nc{\lbar}[1]{\overline{#1}}
\nc{\bra}[1]{\langle#1|}
\nc{\ket}[1]{|#1\rangle}
\nc{\ketbra}[2]{|#1\rangle\!\langle#2|}
\nc{\braket}[2]{\langle#1|#2\rangle}
\nc{\proj}[1]{| #1\rangle\!\langle #1 |}
\nc{\avg}[1]{\langle#1\rangle}
\nc{\smfrac}[2]{\mbox{$\frac{#1}{#2}$}}
\nc{\tr}{\operatorname{tr}}
\nc{\ox}{\otimes}
\nc{\dg}{\dagger}
\nc{\dn}{\downarrow}
\nc{\cA}{{\cal A}}
\nc{\cB}{{\cal B}}
\nc{\cC}{{\cal C}}
\nc{\cD}{{\cal D}}
\nc{\cE}{{\cal E}}
\nc{\cF}{{\cal F}}
\nc{\cG}{{\cal G}}
\nc{\cH}{{\cal H}}
\nc{\cI}{{\cal I}}
\nc{\cJ}{{\cal J}}
\nc{\cK}{{\cal K}}
\nc{\cL}{{\cal L}}
\nc{\cM}{{\cal M}}
\nc{\cN}{{\cal N}}
\nc{\cO}{{\cal O}}
\nc{\cP}{{\cal P}}
\nc{\cR}{{\cal R}}
\nc{\cS}{{\cal S}}
\nc{\cT}{{\cal T}}
\nc{\cX}{{\cal X}}
\nc{\cZ}{{\cal Z}}
\nc{\rank}{\operatorname{rank}}
\nc{\rar}{\rightarrow}
\nc{\lrar}{\longrightarrow}
\nc{\polylog}{\operatorname{polylog}}
\def\a{\alpha}
\def\b{\beta}
\def\g{\gamma}
\def\d{\delta}
\def\e{\epsilon}
\def\l{\lambda}
\def\m{\mu}
\def\r{\rho}
\def\s{\sigma}
\def\ph{\varphi}
\def\ps{\psi}
\def\o{\omega}
\def\G{\Gamma}
\def\Ph{\Phi}
\def\Ps{\Psi}
\def\O{\Omega}
\nc{\RR}{{{\mathbb R}}}
\nc{\CC}{{{\mathbb C}}}
\nc{\FF}{{{\mathbb F}}}
\nc{\NN}{{{\mathbb N}}}
\nc{\ZZ}{{{\mathbb Z}}}
\nc{\PP}{{{\mathbb P}}}
\nc{\QQ}{{{\mathbb Q}}}
\nc{\UU}{{{\mathbb U}}}
\nc{\EE}{{{\mathbb E}}}
\nc{\id}{{\operatorname{id}}}
\nc{\be}{\begin{equation}}
\nc{\ee}{{\end{equation}}}
\nc{\bea}{\begin{eqnarray}}
\nc{\eea}{\end{eqnarray}}
\nc{\Hom}[2]{\mbox{Hom}(\CC^{#1},\CC^{#2})}
\nc{\rU}{\mbox{U}}
\nc{\ob}[1]{#1}
\begin{document}

\title{Weak Decoupling Duality \protect\\ and Quantum Identification\thanks{24 October 2001. A preliminary version of this paper was presented as a contributed talk at the 12th QIP workshop, Santa Fe (NM), 12-16 January 2009.\protect\\ \indent PH is with the School of Computer Science, McGill University, Montreal, Canada. He was supported by the Canada Research Chairs program, the Perimeter Institute, CIFAR, FQRNT's INTRIQ, MITACS, NSERC, ONR through grant N000140811249 and QuantumWorks. Email: patrick@cs.mcgill.ca.\protect\\ \indent AW is with the Department of Mathematics, University of Bristol, Bristol BS8 1TW, U.K. and the Centre for Quantum Technologies, National University of Singapore, 2 Science Drive 3, Singapore 117542. He was supported through an Advanced Research Fellowship of the U.K. EPSRC, the EPSRC's ``QIP IRC'', the European Commission IP ``QAP'', by a Wolfson Research Merit Award of the Royal Society, a Philip Leverhulme Prize and an ERC Advanced Grant. The Centre for Quantum Technologies is funded by the Singapore Ministry of Education and the National Research Foundation as part of the Research Centres of Excellence programme. Email: a.j.winter@bris.ac.uk.}}

\author{Patrick Hayden, \textit{Member, IEEE} and Andreas Winter%
\vspace{3mm}\begin{center}{\sc Dedicated to the memory of Rudolf Ahlswede}\end{center}\vspace{3mm}}

\date{24 October 2011}

\maketitle

\begin{abstract}
If a quantum system is subject to noise, it is possible to perform
quantum error correction reversing the action of the noise if and
only if no information about the system's quantum state leaks to the
environment. In this article, we develop an analogous duality in the
case that the environment approximately forgets the identity of the
quantum state, a weaker condition satisfied by $\e$-randomizing
maps and approximate unitary designs. Specifically, we show that the environment approximately
forgets quantum states if and only if the original channel
approximately preserves pairwise fidelities of pure inputs, an
observation we call weak decoupling duality. Using this tool, we
then go on to study the task of using the output of a channel to
simulate restricted classes of measurements on a space of input
states.  The case of simulating measurements that test whether the
input state is an arbitrary pure state is known as equality testing
or  quantum identification. An immediate consequence of weak decoupling duality
is that the ability to perform quantum identification cannot be cloned.
We furthermore establish that the optimal amortized
rate at which quantum states can be identified through a noisy
quantum channel is equal to the entanglement-assisted classical
capacity of the channel, despite the fact that the task is quantum,
not classical, and entanglement-assistance is not allowed. In
particular, this rate is strictly positive for every non-constant
quantum channel, including classical channels.
\end{abstract}

\section{Introduction} \label{sec:intro}

Quantum channels in modern quantum information
theory~\cite{BennettShor} are modeled as completely positive and
trace-preserving maps $\cN:\cS(A) \rar \cS(B)$ between the state
spaces of quantum systems with Hilbert spaces $A$ and $B$. The
requirement of \emph{complete} positivity means that $\cN$ is not
just \emph{positive}, mapping positive semidefinite operators to
positive semidefinite operators, but that $\id\ox\cN$ is positive
for the identity map $\id$ on any $\cS(R)$. This distinction plays a
central role in the geometry of entanglement because positive but
not completely positive maps can be used to identify entangled
quantum states~\cite{e-witnesses}. This paper will take as its
starting point a similar observation about channel norms.

The Stinespring dilation theorem establishes a fundamental property
of quantum channels: for every channel $\cN$ there exists an ancilla
space $E$ and an isometry $V:A \hookrightarrow B\ox E$ such that
$\cN(\rho) = \tr_E V\rho V^\dagger$~\cite{Stinespring}. This means
that quantum noise can always be interpreted as information loss in
an otherwise deterministic evolution. Since $E$ and $V$ are
essentially unique (up to unitary equivalence), each channel $\cN$
also has an associated \emph{complementary channel} $\cN^c: \cS(A)
\rar \cS(E)$, with $\cN^c(\rho) = \tr_B V\rho V^\dagger$, which is
uniquely defined up to coordinate changes of $E$.

In quantum Shannon theoretic error correction we try to find two
channels $\cE$ and $\cD$ (an encoder and decoder) such that
$\cD\circ\cN\circ\cE \approx \id$. For now we shall consider the
encoding $\cE$ fixed, so that $\cN\circ\cE$ can be treated as a
single channel. The central insight of quantum error
correction~\cite{info.qec,SchumacherWestmoreland,Devetak:Q,KretschmannWerner}
is that the existence of a decoding operation $\cD$ for a channel
$\cN$, i.e.
\begin{equation}
  \label{eq:completely-corrected}
  \forall\rho\in\cS({RA}) \quad \bigl\|
    (\id\ox\cD \circ \cN)\rho^{RA} - \rho^{RA}
  \bigr\|_1
\leq \epsilon,
\end{equation}
is equivalent to the complementary channel being \emph{completely
forgetful}: for all Hilbert spaces $R$,
\begin{equation}
  \label{eq:completely.forget}
  \forall\rho,\sigma\in\cS({RA}) \quad \bigl\|
    (\id\ox\cN^c)\rho^{RA} - (\id\ox\cN^c)\sigma^{RA}
  \bigr\|_1 \leq \delta,
\end{equation}
with a universal relation between $\epsilon$ and $\delta$.

Here we determine a matching duality for the weaker property
of the complementary channel being only (approximately) \emph{forgetful}:
\begin{equation}
  \label{eq:forget}
  \forall\rho,\sigma\in\cS(A) \quad \bigl\|
    \cN^c(\rho^A) - \cN^c(\sigma^A)
  \bigr\|_1 \leq \delta.
\end{equation}
That this is a much weaker property was noticed in the contexts of
approximate encryption and remote state preparation~\cite{rand,rsp}.
The difference between Eqs.~(\ref{eq:completely.forget}) and
(\ref{eq:forget}) is precisely the difference between two norms on
superoperators, the na\"ive one inherited from the trace norm, and
the so-called completely bounded
norm~\cite{Paulsen:CB,Kitaev,KretschmannWerner}.
Not surprisingly, Eq.~(\ref{eq:forget}) will hold provided the main
channel approximately preserves the pairwise fidelities between
input pure states, a property we call \emph{geometry preservation}:
\begin{equation}
  \label{eq:geometry-preserving}
  \forall \ket{\psi},\ket{\ph}\in A
       \quad \bigl| \left\| \ph - \ps \right\|_1
            - \left\| \cN(\ph) - \cN(\ps) \right\|_1 \bigr|
       \leq \e.
\end{equation}

In fact, the reverse is also true. Our investigations will revolve
around \emph{weak decoupling duality}, which asserts that a channel
$\cN$ is geometry-preserving if and only if its complement $\cN^c$
is approximately forgetful, with dimension-independent functions
relating $\d$ and $\e$. Thus, an isometry with two outputs can
preserve geometry to at most one of them. Symmetrically, the
isometry can be forgetful to at most one output. 

%
The geometry preservation property, though much weaker than
transmission of quantum information, must nonetheless be considered
a way of preserving coherence: by virtue of weak decoupling duality, 
geometry preservation cannot be cloned. Indeed, if a
channel has multiple outputs, one of which is geometry-preserving,
then the rest must be forgetful.

Via weak decoupling duality, the many known examples of
approximately forgetful channels that are not completely forgetful
also provide examples of geometry-preserving channels that are not
correctable~\cite{rand,rsp,Ambainis-Smith,generic,Nayak,Harrow,Gross-Eisert,Aubrun}. 
Most strikingly, it is possible to preserve geometry
while almost halving the number of qubits from input to
output~\cite{winter:q-ID-1}. In that case, the geometry of the unit
sphere in $A$ is necessarily encoded into the eigenvectors
\emph{and} eigenvalues of the much smaller output state on $B$. In
contrast to quantum error correction, dimension counting reveals the
mixedness of the output state to be crucial to preserving the
geometry. Some of the geometry of the input state space of pure
quantum states is thus faithfully encoded as noise in the output
state.

Moreover, the analogy with the quantum error correction duality can
be made much stronger. There is a channel communication task very
similar to quantum state transmission which is intimately related to
geometry preservation: 
\emph{quantum identification}~\cite{winter:q-ID-1,winter:q-ID-2}.

\medskip\noindent
{Quantum identification} is a cooperative communication game between
two parties -- conventionally called Alice and Bob -- where Alice
has a given quantum state that she encodes in some way into the
channel, and Bob only wants to simulate measurements consisting of
an arbitrary pure state projector and its complement, which can
interpreted as performing the experiment asking ``Is this the
state?''~\cite{winter:q-ID-1}. The idea is that Alice has an
encoding channel $\cE$ and Bob has, for every pure state $\ph$, a
POVM $(D_\ph,\1-D_\ph)$ such that
\begin{equation}
  \label{eq:q.ID}
  \forall \ket{\psi},\ket{\ph} \quad
     \Bigl| \tr \bigl( (\cN\circ\cE)\psi \bigr) D_\ph
            - \tr \psi\ph \Bigr|
     \leq \epsilon.
\end{equation}
Such an object is called an \emph{$\epsilon$-quantum-ID code}.
(The name is adapted from the classical case~\cite{AhlswedeDueck:ID.A,AhlswedeDueck:ID.B}.
Indeed, in~\cite{Loeber:PhD,AhlswedeWinter:ID-q} the Ahlswede-Dueck theory
of identification is studied in the context of quantum channels;
both papers define ``quantum identification codes'', which however,
in the light of the above definition and~\cite{winter:q-ID-1,winter:q-ID-2},
are better named ``(classical) identification codes via quantum channels''.)

Note that Bob has at his disposal various quantum measurements at
the output of the channel, but the quality of
the code is measured by how well the statistics of this measurement
approximate the statistics of the ideal measurement he wants to
perform on the message state. While it may seem that this is an odd
way of defining a quantum communication task, normal quantum error
correction can also be described this way; namely, Bob wants to be
able to simulate \emph{all} measurements on the message state.
Clearly, if he can perform quantum error correction in the usual
sense, then he can perform the simulation. But conversely, it
follows from the methods
of~\cite{ChristandlWinter,HSW-gaussian,Renes10} that if he only has
two measurements approximating generalized $X$ and $Z$ observables
sufficiently well, he can build a quantum error correction procedure
$\cD$. Moreover, a quantum-ID code with $\e=0$ is itself a quantum
error correcting code; there is no difference between error
correction and identification if both tasks are to be performed
perfectly. But as we shall see, in the regime of non-zero error, 
$\e \neq 0$, the two concepts diverge.
Even the task of transmitting classical information is
conveniently reflected in the framework of simulating measurements:
In that case, Bob only wants to simulate the measurement of 
the generalized $Z$ observable.

With this, one can define in the usual way a \emph{quantum-ID
capacity} $Q_{\rm ID}(\cN)$ of many uses of the channel as the
highest rate at which qubits can be encoded and decoded as in
Eq.~(\ref{eq:q.ID}) with vanishing error -- see
Section~\ref{sec:qid} for details. Previously it was only known that
for the noiseless qubit channel $\id_2$, $Q_{\rm ID}(\id_2) = 2$,
double the value of both the the quantum and classical transmission
capacities~\cite{winter:q-ID-1}.

While reasoning directly about quantum identification (quantum-ID)
codes has
proved challenging, the duality between geometry preservation and
approximate forgetfulness provides a new approach to studying them.
Up to some technical conditions, geometry preservation is equivalent
to the existence of a quantum-ID code. It is therefore
possible to construct quantum-ID codes by finding
approximately forgetful maps. This approach is fruitful because
destroying information is a comparatively indiscriminate task.
Indeed, the analogous strategy has led to a number of
straightforward proofs of the hashing bound on the quantum capacity
of a quantum channel~\cite{HHYW,HSW-gaussian,klesse,merging}.
Classical data is not immune to analysis by purification either. The
duality between privacy amplification and data compression with
quantum side information has recently led to a proof in this
spirit~\cite{Renes10b,RenesRenner10} of the Holevo-Schumacher-Westmoreland theorem
on the classical capacity of a quantum
channel~\cite{Holevo98,SchumacherWestmoreland} .

With weak decoupling duality in hand, it is even possible to
establish a simple formula for an amortized version of the quantum
identification capacity; it is exactly equal to the
entanglement-assisted classical capacity of a quantum channel.

\subsection{Structure of the paper} \label{subsec:structure}

Section~\ref{sec:fidelity.alternative} contains the formal statement
and proof of the weak decoupling duality. The duality is studied in
more detail in Section~\ref{sec:qid}, where forgetfulness is shown
to be nearly equivalent to quantum identification. In that section
we provide a simple statement whose proof eliminates many technical
difficulties, as well as a more flexible version that we prove from
first principles. Section~\ref{sec:qid.cap} uses the flexible
version of the equivalence to construct quantum-ID codes for 
memoryless quantum channels. Section~\ref{sec:amortization-rate}
explores how much side communication is required to achieve the 
amortized quantum identification capacity, establishing that for
some channels, a positive rate is necessary.

\subsection{Notation} \label{subsec:notation}

We will restrict our attention throughout to finite dimensional
Hilbert spaces. If $A$ is a Hilbert space, we write $\cS(A)$ for the
set of density operators acting on $A$. Also, if $A$ and $B$ are two
finite dimensional Hilbert spaces, we write $AB \equiv A\otimes B$
for their tensor product. The Hilbert spaces on which linear
operators act will be denoted by a superscript.  For instance, we
write $\ph^{AB}$ for a density operator on $AB$. Partial traces will
be abbreviated by omitting superscripts, such as $\ph^A \equiv
\tr_B\ph^{AB}$.  We use a similar notation for pure states, e.g.\
$\ket{\psi}^{AB}\in AB$, while abbreviating $\psi^{AB} \equiv
\proj{\psi}^{AB}$. We will write $\id_A$ for the identity map on
$\cS(A)$ and $\id_2$ for the identity qubit channel. The symbol
$\1^A$ will be reserved for the identity matrix acting on the
Hilbert space $A$ and $\pi^A = \1^A / |A|$ for the maximally
mixed state on $A$ (where we denote by $|A|$ the dimension of
the Hilbert space $A$).

The trace norm of an operator, $\|X\|_1$ is defined to be $\tr|X| =
\tr\sqrt{X^\dagger X}$. The similarity of two density operators
$\ph$ and $\ps$ can be measured by the \emph{trace distance}
$\smfrac{1}{2} \| \ph - \ps \|_1$, which is equal to the maximum
over all possible measurements of the variational distance between
the outcome probabilities for the two states. The trace distance is
zero for identical states and one for perfectly distinguishable
states.

A complementary measure is the mixed state fidelity
\begin{equation} \label{eq:fidelity.defn}
  F(\ph,\ps)
  = \left\| \sqrt{\ph}\sqrt{\ps} \right\|_1^2
  = \left( \tr\sqrt{\sqrt{\ph}\psi\sqrt{\ph}} \right)^2,
\end{equation}
defined such that when one of the states is pure, $F(\ph,\ps) =
\tr\ph\ps$. More generally, the fidelity is equal to one for
identical states and zero for perfectly distinguishable states. We
will make frequent use of the following fundamental inequality
between fidelity and trace distance of
states~\cite[Prop.~5]{FuchsVandegraaf:fidelity}:
\begin{equation} \label{eq:F.vs.D}
  1 - \sqrt{F(\ph,\ps)}
  \leq \frac{1}{2}\| \ph-\ps \|_1
  \leq \sqrt{1-F(\ph,\ps)}.
\end{equation}
Both measures can be extended to unnormalized states, but
Eq.~(\ref{eq:F.vs.D}) need not hold in that case. Further properties
of the distance measures are collected in the Appendix.

\section{Weak decoupling duality} \label{sec:fidelity.alternative}

Our investigations will revolve around the duality between geometry
preservation and approximate forgetfulness, which we call weak decoupling duality. The rigorous statement is as follows:
%
%
\begin{theorem}[Weak decoupling duality] \label{thm:fidelity.alternative}
Let $\cN : \cS(A) \rar \cS(B)$ be a quantum channel with
complementary channel $\cN^c : \cS(A) \rar \cS(E)$. Approximate
geometry preservation on $B$ implies approximate forgetfulness for
$E$. That is,
\begin{align*}
    \forall \ket{\psi},\ket{\ph}\in A
      &\quad \| \ph - \ps \|_1 - \| \cN(\ph) - \cN(\ps) \|_1 \leq \d \\
    \text{implies } \forall \ket{\psi},\ket{\ph}\in A
      &\quad \| \cN^c(\ph) - \cN^c(\ps) \|_1 \leq 4\sqrt{2} \d^{1/4}.
\end{align*}
Conversely, approximate forgetfulness for $E$ implies approximate
geometry preservation on $B$:
\begin{align*}
    \forall \ket{\psi},\ket{\ph}\in A
      &\quad \| \cN^c(\ph) - \cN^c(\ps) \|_1 \leq \e \text{ implies }  \\
    \forall \ket{\psi},\ket{\ph}\in A
      &\quad \| \ph - \ps \|_1 - \| \cN(\ph) - \cN(\ps) \|_1 \leq 4\sqrt{2\e}.
\end{align*}
\end{theorem}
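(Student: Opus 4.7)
The plan is to work in the Stinespring picture: fix an isometry $V:A\hookrightarrow B\otimes E$ so that $\cN(\rho)=\tr_E V\rho V^\dagger$ and $\cN^c(\rho)=\tr_B V\rho V^\dagger$. For pure inputs $|\psi\rangle,|\ph\rangle\in A$, the images $|V\psi\rangle,|V\ph\rangle\in B\otimes E$ are jointly pure with $|\langle V\psi|V\ph\rangle|^2 = F(\psi,\ph)$ and with $B$/$E$-marginals equal to $\cN(\psi),\cN(\ph)$ and $\cN^c(\psi),\cN^c(\ph)$ respectively. The essential tool is Uhlmann's theorem, which expresses $F(\cN(\psi),\cN(\ph))=\max_{U_E}|\langle V\psi|\1\otimes U_E|V\ph\rangle|^2$ and $F(\cN^c(\psi),\cN^c(\ph))=\max_{W_B}|\langle V\psi|W_B\otimes\1|V\ph\rangle|^2$; combined with Eq.~\eqref{eq:F.vs.D} relating trace distance to fidelity, this will drive both implications.

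For the reverse direction I would first apply Eq.~\eqref{eq:F.vs.D} to obtain $F(\cN^c(\psi),\cN^c(\ph))\geq 1-\e$, and then use Uhlmann on $E$ to produce a unitary $W_B$ with $|\langle V\psi|(W_B\otimes\1)V\ph\rangle|^2\geq 1-\e$. Hence $\|V\psi V^\dagger-(W_B\otimes\1)V\ph V^\dagger(W_B^\dagger\otimes\1)\|_1\leq 2\sqrt\e$, and contractivity of $\tr_E$ yields $\|\cN(\psi)-W_B\cN(\ph)W_B^\dagger\|_1\leq 2\sqrt\e$. The target $\|\psi-\ph\|_1-\|\cN(\psi)-\cN(\ph)\|_1\leq 4\sqrt{2\e}$ follows by combining these bounds with $\|\psi-\ph\|_1=\|V\psi V^\dagger-V\ph V^\dagger\|_1$ and a triangle argument that absorbs the $W_B$-rotation back into $\|\cN(\psi)-\cN(\ph)\|_1$ via unitary invariance of the trace norm.

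For the forward direction I would first apply Eq.~\eqref{eq:F.vs.D} on both sides to convert $\|\psi-\ph\|_1-\|\cN(\psi)-\cN(\ph)\|_1\leq\d$ into $0\leq F(\cN(\psi),\cN(\ph))-F(\psi,\ph)\leq\d$, using $\|\psi-\ph\|_1=2\sqrt{1-F(\psi,\ph)}$ for pure inputs and monotonicity of fidelity for the lower bound. Now since $|\langle V\psi|V\ph\rangle|^2=F(\psi,\ph)$ realises the $E$-side Uhlmann overlap at $U_E=\1$ while the maximum $F(\cN(\psi),\cN(\ph))$ exceeds it by only $\d$, the identity rotation is almost optimal. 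A Bures-triangle argument inside $B\otimes E$ then transfers this near-alignment of $|V\psi\rangle,|V\ph\rangle$ under $E$-local unitaries into approximate equality of the $E$-marginals themselves, giving $1-F(\cN^c(\psi),\cN^c(\ph))\leq C\sqrt\d$; a final application of Eq.~\eqref{eq:F.vs.D} delivers the $4\sqrt 2\,\d^{1/4}$ scaling.

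The main obstacle is the Bures-triangle step in the forward direction, which is where the characteristic fourth-root exponent is born: one square root comes from the Fuchs--van de Graaf conversion between trace distance and fidelity, and another from translating a $B$-side fidelity gap into an $E$-side fidelity deficit through the geometry of pure bipartite states. The reverse direction loses only a single square root (one Uhlmann plus one triangle inequality), which is why its constant is the cleaner $4\sqrt{2\e}$. I would prove the reverse direction first to calibrate the technique before attacking the more delicate forward argument.
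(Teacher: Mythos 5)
Both directions of your argument share the same fatal flaw: you only ever invoke the hypothesis for the single pair $(\ket{\psi},\ket{\ph})$ whose conclusion you are trying to establish, and the theorem is simply false at that pairwise level. Concretely, take $A=\CC^2$ with orthonormal basis $\ket{\psi},\ket{\ph}$ and the isometry $V\ket{\psi}=\ket{0}^B\ket{0}^E$, $V\ket{\ph}=\ket{1}^B\ket{1}^E$. For this pair the trace distance is perfectly preserved on $B$ (so $\d=0$ in your forward hypothesis), and the identity is exactly optimal in the Uhlmann maximization since $F(\cN(\psi),\cN(\ph))=0=F(\psi,\ph)$; yet $\|\cN^c(\psi)-\cN^c(\ph)\|_1=2$. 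So no ``Bures-triangle argument'' can transfer near-optimality of $U_E=\1$ into closeness of the $E$-marginals. Dually, with $V\ket{\psi}=\ket{\Phi^+}$ and $V\ket{\ph}=\ket{\Phi^-}$ (Bell states) one has $\cN^c(\psi)=\cN^c(\ph)$ maximally mixed, so $\e=0$ in your reverse hypothesis, and Uhlmann hands you a $W_B$ with $W_B\cN(\ph)W_B^\dg=\cN(\psi)$ exactly; but $\|\cN(\psi)-\cN(\ph)\|_1=0$ while $\|\psi-\ph\|_1=2$. Knowing that $\cN(\ph)$ is \emph{unitarily} close to $\cN(\psi)$ gives no lower bound on $\|\cN(\psi)-\cN(\ph)\|_1$, and the triangle inequality runs the wrong way here: it upper-bounds that quantity by the uncontrolled term $\|W_B\cN(\ph)W_B^\dg-\cN(\ph)\|_1$.

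What both counterexamples show is that the conclusion for a given pair genuinely requires the hypothesis for \emph{superpositions} of $\ket{\psi}$ and $\ket{\ph}$; in the two examples above the hypothesis fails precisely for $\ket{\chi_\pm}=(\ket{\psi}\pm\ket{\ph})/\sqrt{2}$. This is where the paper's proof does its real work. For the forward direction it applies the hypothesis to both the pair $(\ph,\ps)$ and the conjugate pair $(\chi_+,\chi_-)$, which by Renes' theorem yields a decoder on $B$ for the whole two-dimensional span, and then the Kretschmann--Werner information-disturbance theorem converts decodability on $B$ into forgetfulness on $E$. The reverse direction uses the hypothesis on the entire span to bound the diamond norm of the restricted complementary channel (via Lemma~\ref{lem:norm.bound}) and runs information-disturbance in the other direction to recover a decoder, from which the triangle inequality gives geometry preservation. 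Your Stinespring--Uhlmann--Fuchs--van de Graaf toolkit is fine as far as it goes (Lemma~\ref{lemnice} is exactly the identity you are reaching for), but without feeding the superposed states into the hypothesis no amount of manipulation of the pair alone can close either direction.
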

Note that we have dropped an absolute value sign as compared to
Eq.~(\ref{eq:geometry-preserving}) since $\|\ph - \ps\|_1 \geq
\|\cN(\ph) - \cN(\ps) \|_1$ holds automatically for all quantum
channels $\cN$. (See, for example, \cite{mikeandike}.)

The duality is a straightforward consequence of two basic results in
quantum information theory. The first is that the ability to
transmit classical data in two conjugate bases is equivalent to the
ability to transmit entanglement. That observation is the basis for
the stabilizer approach to quantum error correcting
codes~\cite{gottesman.stabilizer}. Here we will use a clean
approximate formulation due to Renes~\cite{Renes10}. The second
result is the continuity of the Stinespring dilation of a quantum
channel, established by Kretschmann \emph{et al.}~\cite{KretschmannWerner}.
Here we only need a corollary, which
can be interpreted as a bound on the information-disturbance
trade-off. The theorem is stated in terms of the following norms:
%
%
\begin{definition}
For a linear superoperator $\G : \cS(A) \rar \cS(B)$, let
\begin{equation*}
    \| \G \|_\diamond^{(k)}
    = \max_{\|X\|_1 \leq 1}
        \bigl\| (\id_k \ox \G)X \bigr\|_1,
\end{equation*}
where maximization is over operators $X$ on $\CC^k \ox A$.
Define $\| \G \|_\diamond = \sup_k \| \G \|_\diamond^{(k)}$, 
the \emph{completely bounded trace norm}~\cite{Paulsen:CB}
(also known as \emph{diamond norm}~\cite{Kitaev}).
\end{definition}
Note that the convexity of the trace norm implies that the supremum
is achieved on a rank-one operator (if $\G$ is Hermitian-preserving,
then on a pure quantum state). Since any operator on $A$ can be
``purified'' by a system of dimension $|A|$, it follows that the
supremum is achieved when $k = |A|$.

Of course, since all our Hilbert spaces are finite-dimensional,
all these norms are equivalent -- indeed, by Lemma~\ref{lem:norm.bound} 
in the Appendix,
\[
  \| \G \|_\diamond^{(1)} \leq \| \G \|_\diamond^{(k)} \leq k \| \G \|_\diamond^{(1)}.
\]
Since the factor of $k$ cannot be improved, this means that the
norms can differ by a factor as large as the dimension of $A$,
rendering the norms inequivalent in asymptotic settings, such as 
will be considered in the following. This can also be seen in 
the difference between approximately and completely forgetful
maps. There, $\G$ is the difference between a completely positive,
trace-preserving map and a constant map (on states); approximate
forgetfulness postulates a bound on $\|\G\|_\diamond^{(1)}$
while complete forgetfulness requires bounding $\|\G\|_\diamond$.

%
%
\begin{theorem}[Information-disturbance~\cite{KretschmannWerner}]
\label{thm:info.disturb} Let $V : A \rar B \ox E$ be an isometric
extension of the channel $\cN : \cS(A) \rar \cS(B)$ and let $\cN^c :
\cS(A) \rar \cS(E)$ be the complementary channel. Fix a state $\r
\in \cS(A)$ and let $\cR : \cS(A) \rar \cS(E)$ be the channel taking
all inputs to $\cN^c(\r)$. Then
\begin{equation*}
    \frac{1}{4} \inf_{\cD} \left\| \cD \circ \cN - \id \right\|_\diamond^2
    \leq \left\| \cN^c - \cR \right\|_\diamond
    \leq 2 \inf_{\cD} \left\| \cD \circ \cN - \id
    \right\|_\diamond^{1/2}.
\end{equation*}
Both infimums are over all quantum channels.
\end{theorem}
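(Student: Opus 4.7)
The plan is to derive this information-disturbance tradeoff as a direct corollary of the continuity of the Stinespring dilation~\cite{KretschmannWerner}: for any two channels $\cM_1,\cM_2:\cS(X)\rar\cS(Y)$ with Stinespring isometries $V_1,V_2:X\hookrightarrow Y\ox Z$ into a common environment $Z$, one has $\inf_U\|V_1-(\1_Y\ox U)V_2\|_\infty\leq\|\cM_1-\cM_2\|_\diamond^{1/2}$ and $\|\cM_1-\cM_2\|_\diamond\leq 2\inf_U\|V_1-(\1_Y\ox U)V_2\|_\infty$, the infima being over partial isometries $U$ on $Z$. The whole proof then reduces to selecting the right Stinespring dilations for the four channels $\id_A$, $\cR$, $\cN$, and $\cD\circ\cN$ so that these continuity bounds read off as the claimed inequalities.

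For the lower bound $\tfrac14\inf_\cD\|\cD\circ\cN-\id\|_\diamond^2\leq\|\cN^c-\cR\|_\diamond$, I start from the hypothesis, purify $\cN^c(\r)$ as $\ket\sigma^{EF}$, and dilate $\cR$ as $V_\cR\ket\psi=\ket\sigma^{EF}\ket\psi^{A'}$ (viewing $\cR:\cS(A)\rar\cS(E)$ with environment $FA'$). The given isometry $V$ dilates $\cN^c$ with environment $B$, so the first Stinespring continuity bound yields a partial isometry $U:B\rar FA'$ with $\|(\1_E\ox U)V-V_\cR\|_\infty\leq\sqrt{\|\cN^c-\cR\|_\diamond}$. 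Reading $U$ as the Stinespring dilation of the decoder $\cD(\tau):=\tr_F(U\tau U^\dg)$, the isometry $(\1_E\ox U)V$ becomes a dilation of $\cD\circ\cN:A\rar A'$ with environment $EF$, while $V_\cR$ is visibly a dilation of $\id_A$ with constant ancilla $\ket\sigma^{EF}$. The second Stinespring continuity bound then converts this into $\|\cD\circ\cN-\id\|_\diamond\leq 2\sqrt{\|\cN^c-\cR\|_\diamond}$, which squares to the claim.

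For the upper bound $\|\cN^c-\cR\|_\diamond\leq 2\inf_\cD\|\cD\circ\cN-\id\|_\diamond^{1/2}$, fix any decoder $\cD$ with Stinespring dilation $W:B\hookrightarrow A'\ox E_D$. The composite $(W\ox\1_E)V:A\hookrightarrow A'\ox E_D E$ dilates $\cD\circ\cN$, while the trivial isometry $V_0\ket\psi=\ket\psi^{A'}\ox\ket{\sigma'}^{E_D E}$ dilates $\id_A$ with any fixed ancilla $\ket{\sigma'}$. Stinespring continuity bounds these two isometries by $\sqrt{\|\cD\circ\cN-\id\|_\diamond}$ in operator norm up to a partial isometry on $E_D E$; tracing out the output system $A'$ on both sides then converts the operator-norm bound into a diamond-norm bound between the corresponding ``environment-side'' channels $A\rar E_D E$. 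A further partial trace over $E_D$ reduces the left-hand side to $\cN^c$ and the right-hand side to a constant channel on $E$, and a triangle inequality absorbs the discrepancy between that optimal constant channel and the specified $\cR(\cdot)=\cN^c(\r)$. The main technical obstacle is not any single step but the careful bookkeeping of which Hilbert space plays output versus environment in each application of Stinespring continuity, since the partial isometries act only on the designated environment and obtaining the clean constants $1/4$ and $2$ requires making the natural identifications in each direction.
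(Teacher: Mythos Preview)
The paper does not prove Theorem~\ref{thm:info.disturb} itself; it quotes the result from Kretschmann--Schlingemann--Werner~\cite{KretschmannWerner}, so there is no in-paper proof to compare against. Your approach---bootstrapping both inequalities from the two-sided Stinespring continuity bounds of~\cite{KretschmannWerner}---is exactly the route taken in that reference, and the lower bound argument is essentially correct. One small point there: the partial isometry $U:B\to FA'$ delivered by Stinespring continuity need not be a full isometry, so $\cD(\tau)=\tr_F U\tau U^\dg$ is in general only trace-nonincreasing; you should complete $U$ to an isometry (enlarging $F$ if necessary), which does not affect the operator-norm estimate and makes $\cD$ a genuine channel.

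The upper bound sketch has a real gap in the constants. Your argument produces a constant channel $\cR_\omega$ with $\|\cN^c-\cR_\omega\|_\diamond\leq 2\sqrt{\|\cD\circ\cN-\id\|_\diamond}$, where $\omega=\tr_{E_D}U'\sigma'U'^{\dagger}$ is determined by the environment partial isometry and is \emph{not} under your control. To pass from $\cR_\omega$ to the specified $\cR$ (constant at $\cN^c(\rho)$) you invoke the triangle inequality, but $\|\cR_\omega-\cR\|_\diamond=\|\omega-\cN^c(\rho)\|_1$ is itself only bounded by $2\sqrt{\|\cD\circ\cN-\id\|_\diamond}$ (evaluate the previous estimate at the input $\rho$), so you end up with $4\sqrt{\,\cdot\,}$ rather than $2\sqrt{\,\cdot\,}$. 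Indeed, the standard formulation in~\cite{KretschmannWerner} has $\inf_\sigma\|\cN^c-\cR_\sigma\|_\diamond$ on the middle term, for which your argument gives the constant $2$ directly; the fixed-$\rho$ version quoted in the paper is a slight strengthening on the right-hand inequality, and obtaining it with the stated constant requires more than the bare triangle inequality you invoke.
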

The proof of weak decoupling duality is a fairly routine matter of
combining these results:

\medskip
%
%
\begin{proof+}{of Theorem~\ref{thm:fidelity.alternative}}
We begin by assuming approximate geometry preservation. Fix
$\ket{\ph} \perp \ket{\ps}$ in $A$ then set $T =
\operatorname{span}(\ket{\ph},\ket{\ps})$. Suppose that
\begin{equation*}
    \| \cN(\o) - \cN(\xi) \|_1 \geq \| \o - \xi \|_1 - \d
\end{equation*}
for all $\ket{\o}, \ket{\xi} \in A$. Then if $\ket{\chi_\pm} =
\smfrac{1}{\sqrt{2}}( \ket{\ph} \pm \ket{\ps} )$, we have
\begin{align*}
    \left\| \cN( \ph ) - \cN( \ps ) \right\|_1 &\geq 2 - \d
        \quad \mbox{and} \\
    \left\| \cN( \chi_+ ) - \cN( \chi_- ) \right\|_1 &\geq 2 - \d.
\end{align*}
We can therefore transmit data in two conjugate bases through $\cN$,
which implies that entanglement is also faithfully transmitted. In
particular~\cite[Thm.~1]{Renes10} (with ``guessing
probability'' $1-\d/2$) implies that there exists a channel $\cD :
\cS(B) \rar \cS(T)$ such that
\begin{equation*}
    \left\| (\id_2 \ox \cD \circ \cN)\Ph - \Ph \right\|_1
    \leq 2\sqrt{\d},
\end{equation*}
where $\ket{\Ph} = \smfrac{1}{\sqrt{2}}(\ket{0}\ket{\ph} +
\ket{1}\ket{\ps})$. But trace norm monotonicity with respect to
dephasing the first system then gives
\begin{align*}
    &\left\| (\id_2 \ox \cD \circ \cN)\Ph - \Ph \right\|_1 \\
    &\geq \frac{1}{2} \big\|
        \proj{0} \ox \left[ (\cD \circ \cN )\ph - \ph \right] \\
     &\phantom\quad  \quad \quad \quad + \proj{1} \ox \left[ (\cD \circ \cN )\ps - \ps \right]
    \big\|_1 \\
    &=
    \frac{1}{2} \| (\cD \circ \cN)\ph - \ph \|_1
    + \frac{1}{2} \| (\cD \circ \cN)\ps - \ps \|_1.
\end{align*}
Therefore, $\| (\cD \circ \cN)\ph - \ph \|_1 \leq 4\sqrt{\d}$ and by changing the choice of dephasing basis, we can conclude that $\| \cD \circ \cN - \id_2 \|_\diamond^{(1)} \leq 4 \sqrt{\d}$.
Combining this with Lemma~\ref{lem:norm.bound} in the Appendix
implies that $\| \cD \circ \cN - \id_2 \|_\diamond \leq 8\sqrt{\d}$. The
information-disturbance theorem (Theorem \ref{thm:info.disturb}) applied with
$\cR$ the map taking all states to $\cN^c(\ph)$
then implies that for all $\ket{\o} \in T$,
\begin{equation*}
    \| \cN^c(\ph)-\cN^c(\o) \|_1
    \leq 2 (8\sqrt{\d})^{1/2}
    = 4 \sqrt{2}\d^{1/4}.
\end{equation*}
Since $T$ is an arbitrary two-dimensional subspace of $A$, however,
the inequality must hold for all $\ket{\ph}$ and $\ket{\o}$ in $A$.

For the converse, suppose that, for all states $\ket{\ph}, \ket{\ps}
\in A$, the inequality $\| \cN^c(\ph) - \cN^c(\ps) \|_1 \leq \e$
holds. Fix $\ket{\ph}$ and $\ket{\ps}$ then let $\tilde\cN^c$ be the
restriction of $\cN^c$ to states on $T =
\operatorname{span}(\ket{\ph},\ket{\ps})$. Let $\cR$ be the channel
on $\cS(T)$ that always outputs $\cN^c(\ps)$. Then once more by
Lemma~\ref{lem:norm.bound} in the Appendix,
$\| \tilde\cN^c - \cR \|_\diamond \leq 2
\e$. Using this time the lower bound from Theorem~\ref{thm:info.disturb}, there exists a channel $\cD
: \cS(B) \rar \cS(T)$ such that $\smfrac{1}{4} \| \cD \circ \cN - \id \|_\diamond^2 \leq 2\epsilon$. In particular, for all $\ket{\o} \in T$,
\begin{equation*}
    \frac{1}{4} \left\|
        (\cD \circ \cN)\o - \o
    \right\|_1^2
    \leq 2\e.
\end{equation*}
Applying the triangle inequality several more times gives:
\begin{align*}
    4\sqrt{2\e}
    &\geq \left\| (\cD \circ \cN)\ph - \ph \right\|_1
            + \left\| (\cD \circ \cN)\ps - \ps \right\|_1 \\
    &\geq \left\| \ph - \ps \right\|_1
            - \left\| (\cD \circ \cN)(\ph - \ps) \right\|_1 \\
    &\geq \left\| \ph - \ps \right\|_1
            - \left\| \cN(\ph - \ps) \right\|_1,
\end{align*}
where the final inequality used that the quantum channel $\cD$
cannot increase the trace norm. Rearranging the final expression
gives the desired inequality.
\end{proof+}
%
%
%

\section{Quantum identification} \label{sec:qid}

Quantum identification allows a sender to transmit arbitrary quantum
states but only allows the receiver to perform a restricted set of
measurements, namely tests to determine whether the transmitted
state consists of an arbitrary target state. The receiver gets to
choose the target state \emph{after} the sender has transmitted, so
the code must work for all targets. If the test can be performed
perfectly, then quantum identification is easily seen to be
equivalent to quantum state transmission, but in the approximate
setting, the tasks are not equivalent.
%
%
\begin{definition} \label{defn:qid.code}
{\bf \cite{winter:q-ID-1}}
An \emph{$\e$-quantum-ID code} for the channel $\cN : \cS(A) \rar \cS(B)$
consists of an encoding map $\cE : \cS(S) \rar \cS(A)$ and, for
every pure state $\ket{\ph} \in S$, a POVM $(D_\ph,\1-D_\ph)$ acting
on $\cS(B)$ such that
\begin{equation*}
    \forall \ket{\ps},\ket{\ph} \in S
    \quad
    \Big| \tr\big( (\cN \circ \cE) \ps \big) D_\ph - | \braket{\ph}{\ps} |^2 \Big|
    \leq \e.
\end{equation*}
\end{definition}
If the receiver had been able to perform the measurement
$(\proj{\ph},\1-\proj{\ph})$ on the input state $\ket{\ps}$, then he
would have observed outcome $\proj{\ph}$ with probability
$|\braket{\ph}{\ps}|^2$. The definition therefore ensures that the
receiver can simulate the measurement for all input and target
states.

Many variants of the definition have been proposed. In particular,
one could imagine drawing a distinction between oblivious ID codes,
in which the sender is only given a physical quantum state to send,
and visible ID codes, in which the sender knows the identity of the
state she is trying to transmit~\cite{winter:q-ID-1}. Entanglement
assistance is also interesting and exceptionally powerful in the
visible setting~\cite{remote-state-preparation}.  A different task
that is nonetheless similar in spirit is to use quantum states as
``fingerprints'' for identifying classical messages in a model where
pairs of messages are to be compared by a
referee~\cite{fingerprinting}. For comparing quantum states,
however, the simple definition considered here is arguably the most
natural.

If we integrate the encoding $\cE$ and noisy channel $\cN$ from
Definition~\ref{defn:qid.code} into a single map with output $B$ and
environment $E$, we may think of the code Hilbert space $S$ as a
subspace of $B\ox E$. More formally, if we let $V$ be the
Stinespring dilation of $\cN \circ \cE$, then $V: S \hookrightarrow
B \ox E$ and we can identify the code with a subspace of $B \ox E$.
This identification simplifies the notation and we will use it for
the remainder of the paper.

The main result of this section is a demonstration that a subspace
of $B \ox E$ is a quantum-ID code for $B$ iff it is approximately
forgetful for $E$. (There is a small technical caveat to the
statement: the reduced states on $E$ must also obey a regularity
condition for the reverse implication to hold, but we will defer
discussion of the details.) For the moment, let us begin by
considering the relationship between quantum identification and geometry
preservation.
%
%
\begin{lemma} \label{lemqID.fidelity}
Let $S \subseteq B \ox E$ be a subspace of a tensor product Hilbert
space that is an $\epsilon$-quantum-ID code for $B$. In other words,
suppose that, for each pure state $\ket{\ph}\in S$, there exists an
operator $0 \leq D_\ph \leq \1$ on $B$ such that for all pure states
$\ket{\ph},\ket{\psi} \in S$,
\[
    \bigl| \tr \psi^B D_\ph - \tr \psi\ph \bigr| \leq \epsilon.
\]
Then, for all $\ket{\ph},\ket{\psi} \in S$,
\begin{equation*}
F(\ph,\ps) \leq F(\ph^B,\ps^B) \leq F(\ph,\ps) + 4 \sqrt{\e}.
\end{equation*}
\end{lemma}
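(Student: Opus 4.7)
The plan is to handle the two inequalities separately, since the lower bound is essentially automatic and all the work sits in the upper bound.

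For $F(\ph,\ps) \le F(\ph^B,\ps^B)$, I would just invoke monotonicity of the fidelity under CPTP maps applied to both arguments, taking the map to be the partial trace over $E$. This direction does not use the quantum-ID hypothesis at all; it is simply a statement that discarding the $E$ register can only make the two states less distinguishable, hence at least as similar.

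For the opposite direction, I would extract from the ID code condition two pieces of information about the $B$-marginals: setting $\ps=\ph$ gives $\tr(\ph^B D_\ph) \ge 1-\e$, and hence $\tr\bigl(\ph^B(\1 - D_\ph)\bigr) \le \e$, while for general $\ps$ one gets $\tr(\ps^B D_\ph) \le F(\ph,\ps) + \e$. The key analytic tool is then the Fuchs--Caves measurement characterization of the fidelity, $\sqrt{F(\r,\s)} = \inf_{\{M_i\}} \sum_i \sqrt{\tr(\r M_i)\,\tr(\s M_i)}$, applied to $\ph^B$ and $\ps^B$ with the binary POVM $(D_\ph,\1-D_\ph)$. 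Together with the trivial estimates $\tr(\ph^B D_\ph) \le 1$ and $\tr\bigl(\ps^B(\1-D_\ph)\bigr) \le 1$, this yields
\begin{equation*}
\sqrt{F(\ph^B,\ps^B)} \le \sqrt{F(\ph,\ps) + \e} + \sqrt{\e}.
\end{equation*}
Squaring then gives $F(\ph^B,\ps^B) \le F(\ph,\ps) + 2\e + 2\sqrt{\e\bigl(F(\ph,\ps)+\e\bigr)}$, which under the assumption $F(\ph,\ps) + \e \le 1$ simplifies (using $2\e \le 2\sqrt{\e}$ for $\e \le 1$) to the claimed bound $F(\ph,\ps) + 4\sqrt{\e}$.

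The only subtlety I anticipate is cosmetic and concerns the exact constant: when $F(\ph,\ps) + \e > 1$ the intermediate bound above becomes vacuous, but in that regime the conclusion is immediate from $F(\ph^B,\ps^B) \le 1 \le F(\ph,\ps) + \e \le F(\ph,\ps) + 4\sqrt{\e}$, so a short case split pins down the factor $4\sqrt{\e}$ in all regimes. Beyond bookkeeping, there is no genuine obstacle — the whole argument is a direct marriage of partial-trace monotonicity with the Fuchs--Caves variational formula, and no appeal to the fidelity alternative or Stinespring continuity is needed here.
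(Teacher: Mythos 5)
Your proposal is correct and follows essentially the same route as the paper: the lower bound is monotonicity of fidelity under $\tr_E$, and the upper bound applies the data-processing inequality for fidelity to the binary measurement $(D_\ph,\1-D_\ph)$ — your invocation of the Fuchs--Caves variational formula for that fixed POVM is exactly the paper's statement that fidelity is monotone under the associated measurement channel $M$. The only cosmetic difference is bookkeeping: the paper keeps $\tr\psi^B D_\ph\leq 1$ inside the square before substituting the ID bound, which avoids your (correctly handled) case split when $F(\ph,\ps)+\e>1$.
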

\begin{proof}
Consider the measurement $(D_\ph,\1-D_\ph)$ and associated channel
$M: \rho \mapsto \operatorname{diag}( \, \tr\rho D_\ph, 1-\tr\rho D_\ph \, )$ which acts on $\cS(B)$.
By applying the monotonicity of the fidelity under quantum channels
to $\tr_E$ and $M$, we get
\[\begin{split}
    F(\psi,\ph)  \leq F(\psi^B,\ph^B)
                    &\leq F\bigl( M(\psi^B),M(\ph^B) \bigr) \\
                    &\leq \left( \sqrt{\tr\psi^B D_\ph} + \sqrt{\epsilon} \right)^2 \\
                    &\leq F(\psi,\ph) + 2\sqrt{\epsilon} + \epsilon + \epsilon,
\end{split}\]
which proves the lemma.
\end{proof}

\medskip
The fidelity is therefore approximately preserved by quantum-ID
codes. Geometry preservation is defined in terms of the
trace distance, however, not the fidelity. While it is indeed the case that
quantum-ID codes preserve geometry, the argument is
somewhat more delicate because applying the measurement $(D_\ph, \1-D_\ph)$
causes a significant drop in the trace distance even as it leaves the fidelity
nearly unchanged. Instead, Theorem~\ref{thm:ID.dualto.rand} will
allow us to infer that quantum-ID codes preserve geometry by
virtue of the fact that their complementary channels are forgetful.

In order to succeed at quantum identification, the following lemma
demonstrates that it is sufficient to be able to identify orthogonal
states:
%
%
\begin{lemma} \label{lem:ortho.to.all}
Let $S \subseteq B \ox E$ be a subspace of a tensor product Hilbert
space such that for $\ket{\ph} \in S$ there exists $0 \leq D_\ph
\leq \1$ acting on $B$ satisfying
\begin{align*}
    \tr \ph^B D_\ph \geq 1 - \d
    \quad \mbox{and} \quad
    \tr \ps^B D_\ph < \d
\end{align*}
whenever $\ket{\ps} \in S$ is orthogonal to $\ket{\ph}$. Then $S$ is
a quantum-ID code with error probability $\d +
2\sqrt{\d}$.
\end{lemma}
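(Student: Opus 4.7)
The plan is to reduce the two-state identification task to the one-state test $(D_\ph, \1-D_\ph)$ by decomposing $\ket{\ps}$ into its component along $\ket{\ph}$ and an orthogonal component lying in $S$, then controlling all four bilinear terms produced.

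Concretely, fix $\ket{\ph},\ket{\ps}\in S$ and write $\ket{\ps} = \alpha\ket{\ph} + \beta\ket{\ph^\perp}$, where $\ket{\ph^\perp}\in S$ satisfies $\braket{\ph}{\ph^\perp}=0$ and $|\alpha|^2+|\beta|^2=1$. Note that $|\alpha|^2 = |\braket{\ph}{\ps}|^2$, so our job is to show $|\tr\ps^B D_\ph - |\alpha|^2| \leq \d + 2\sqrt\d$. Expanding $\tr \ps^B D_\ph = \bra{\ps}(D_\ph\ox\1_E)\ket{\ps}$ with the decomposition yields
\begin{equation*}
  |\alpha|^2 \bra{\ph}(D_\ph\ox\1)\ket{\ph}
  + |\beta|^2 \bra{\ph^\perp}(D_\ph\ox\1)\ket{\ph^\perp}
  + 2\operatorname{Re}\bigl[\bar\alpha\beta\,\bra{\ph}(D_\ph\ox\1)\ket{\ph^\perp}\bigr].
\end{equation*}
The first two diagonal terms lie in $[|\alpha|^2(1-\d),|\alpha|^2]$ and $[0,|\beta|^2\d]$ respectively by the two hypotheses applied to $\ph$ and to the orthogonal state $\ph^\perp\in S$.

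The main technical step, though routine, is controlling the cross term. Since $D_\ph\ox\1$ is positive semidefinite, the Cauchy--Schwarz inequality for its induced sesquilinear form gives
\begin{equation*}
  \bigl|\bra{\ph}(D_\ph\ox\1)\ket{\ph^\perp}\bigr|
  \leq \sqrt{\tr\ph^B D_\ph}\,\sqrt{\tr(\ph^\perp)^B D_\ph}
  \leq 1\cdot\sqrt{\d},
\end{equation*}
so the cross term is bounded in absolute value by $2|\alpha\beta|\sqrt\d \leq 2\sqrt\d$.

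Combining these three bounds yields $|\alpha|^2(1-\d) - 2\sqrt\d \leq \tr\ps^B D_\ph \leq |\alpha|^2 + |\beta|^2\d + 2\sqrt\d$, so the deviation from $|\alpha|^2=|\braket{\ph}{\ps}|^2$ is at most $\d + 2\sqrt\d$ in either direction, which is exactly the claimed $\e$-quantum-ID bound. The only part that required any thought was identifying the right form of Cauchy--Schwarz; everything else is mechanical.
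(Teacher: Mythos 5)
Your proof is correct and follows essentially the same route as the paper's: decompose $\ket{\ps}$ into its components along $\ket{\ph}$ and an orthogonal vector in $S$, bound the two diagonal terms by the hypotheses, and control the cross term via Cauchy--Schwarz for the positive form induced by $D_\ph\ox\1$ (the paper bounds the cross term by $\sqrt{\bra{\ph^\perp}(D_\ph\ox\1)\ket{\ph^\perp}}$ using $D_\ph\le\1$, which is the same estimate as yours). No gaps.
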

\begin{proof}
Let $\ket{\ph}, \ket{\ps} \in S$ be arbitrary and let $\ket{\ph'}$
be orthogonal to $\ket{\ph}$ in
$\operatorname{span}(\ket{\ph},\ket{\ps})$. Write $$\ket{\ps} = \a
\ket{\ph} + \b \ket{\ph'}.$$ 
Expanding shows that $\tr\psi^B D_\ph$ is equal to
\[\begin{split}
  |\alpha|^2 &\tr\ph^B D_\ph + |\beta|^2 \tr{\ph'}^B D_\ph \\
              &\phantom{=}
                +\alpha \overline\beta \tr\ket{\ph}\!\bra{\ph'}(D_\ph \ox\1) 
                +\overline\alpha \beta \tr\ket{\ph'}\!\bra{\ph}(D_\ph \ox\1),
\end{split}\]
which results in
\[\begin{split}
  \bigl| \tr\psi^B D_\ph &- |\alpha|^2 \bigr| \\
     &\leq |\alpha|^2 (1-\tr\ph^B D_\ph) + |\beta|^2 \tr{\ph'}^B D_\ph \\
     &\phantom{======}
                 + 2|\alpha\beta| |\bra{\ph} (D_\ph\ox\1) \ket{\ph'}| \\
     &\leq |\alpha|^2 (1-\tr\ph^B D_\ph) + |\beta|^2 \tr{\ph'}^B D_\ph \\
     &\phantom{======}         
                 + 2|\alpha\beta| \sqrt{\bra{\ph'} (D_\ph\ox\1) \ket{\ph'}} \\
     &\leq \d + 2 \sqrt{\d},
\end{split}\]
where we have used the Cauchy-Schwarz inequality and the assumption
that orthogonal states in $S$ can be well discriminated.
\end{proof}

\medskip
Now we are ready to state and prove our main  result on the duality
between quantum identification and approximate forgetfulness. As with weak decoupling duality, we have chosen to prove the theorem by composing general purpose results for the purpose of pedagogical clarity, which leads to artificially poor scaling of the parameters. Readers concerned with optimizing the parameters should also consult Theorem \ref{thm:ID.dualto.rand.tech}.
%
%
\begin{theorem}[Identification and forgetfulness] \label{thm:ID.dualto.rand}
Quantum-ID codes and forgetfulness are dual in the following
quantitative sense. If a subspace $S \subseteq B \ox E$ is an
$\e$-quantum-ID code for $B$, then $E$ is approximately
$\delta$-forgetful:
\[
    \forall \ket{\ph},\ket{\psi} \in S\quad
            \frac{1}{2}\big\| \ph^E-\psi^E \big\|_1
            \leq \delta:=7\sqrt[4]{\epsilon}.
\]
Conversely, if $E$ is approximately $\delta$-forgetful, then
geometry is approximately preserved on $B$:
\[
    \forall\ket{\ph},\ket{\psi} \in S\quad
        \big\| \ph - \ps \big\|_1 - \big\| \ph^B - \ps^B \big\|_1
        \leq \e := 4\sqrt{2\d}.
\]
If, in addition, the nonzero eigenvalues of $\ph^B$ lie in the
interval $[\m,\l]$ for all $\ket{\ph} \in S$, then $S$ is an
$\eta$-quantum-ID code for $\eta := 7\d^{1/8}\sqrt{\l/\m}$.
\end{theorem}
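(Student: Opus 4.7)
The theorem bundles three implications. The middle one---forgetfulness on $E$ implies geometry preservation on $B$---is immediate: view $S$ as the input space of the channel $\cN:\cS(S)\to\cS(B)$, $\r\mapsto\tr_E\r$, whose complementary channel is $\r\mapsto\tr_B\r$, and apply the reverse direction of the fidelity alternative (Theorem~\ref{thm:fidelity.alternative}) to read off $\|\ph-\ps\|_1-\|\ph^B-\ps^B\|_1\leq 4\sqrt{2\d}$.

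For the forward implication (QID on $B$ implies forgetfulness on $E$), I would reuse the internal machinery of the fidelity alternative proof rather than invoking the theorem as a black box; feeding Lemma~\ref{lemqID.fidelity} directly into Fuchs--van de Graaf is not enough, since converting approximate fidelity preservation to approximate trace-distance preservation loses an $O(1)$ constant for mixed states. For any 2D subspace $T\subseteq S$, fix an orthonormal basis $\{\ket{\ph},\ket{\ps}\}$ and its conjugate $\ket{\chi_\pm}=(\ket{\ph}\pm\ket{\ps})/\sqrt{2}$. The QID POVMs $D_\ph$ and $D_{\chi_+}$ distinguish each basis on the $B$-reductions with success probability $\geq 1-\e$, so Renes's theorem (invoked exactly as in the proof of Theorem~\ref{thm:fidelity.alternative}) yields a decoder $\cD:\cS(B)\to\cS(T)$ with $\|(\id_2\ox\cD\circ\tr_E)\Ph-\Ph\|_1\leq 2\sqrt{2\e}$ on the maximally entangled state $\ket{\Ph}$. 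The same dephasing step plus Lemma~\ref{lem:norm.bound} promote this to $\|\cD\circ\tr_E-\id_T\|_\diamond\leq 8\sqrt{2\e}$, and the information-disturbance theorem (Theorem~\ref{thm:info.disturb}) then delivers $\|\tr_B-\cR\|_\diamond\leq 4\cdot 2^{3/4}\e^{1/4}$ for the constant channel $\cR$, whence $\tfrac{1}{2}\|\ph^E-\ps^E\|_1<7\e^{1/4}$ for all $\ph,\ps\in T$. Since $T$ was arbitrary, the bound extends to every pair in $S$.

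For the regularised converse, which I expect to be the main obstacle because it requires manufacturing a $B$-only measurement from an $E$-side hypothesis, my plan is to chain part~(2) with Lemma~\ref{lem:ortho.to.all}. For orthogonal $\ket{\ph}\perp\ket{\ps}$ in $S$, part~(2) combined with Fuchs--van de Graaf gives $F(\ph^B,\ps^B)\leq 4\sqrt{2\d}$. I would then use the natural candidate $D_\ph:=\Pi_{\ph^B}$, the support projector of $\ph^B$, for which $\tr\ph^B D_\ph=1$ automatically. The false-positive probability $\tr\ps^B D_\ph$ is where the eigenvalue condition earns its keep: $\Pi_{\ph^B}\leq\ph^B/\m$, and the matrix-norm interpolation $\|X\|_2^2\leq\|X\|_1\|X\|_\infty$ applied to $X=\sqrt{\ps^B}\sqrt{\ph^B}$ together with $\|\sqrt{\r}\|_\infty\leq\sqrt{\l}$ yields $\tr\ps^B\ph^B\leq\l\sqrt{F(\ph^B,\ps^B)}$. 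Hence $\tr\ps^B D_\ph\leq 2(\l/\m)(2\d)^{1/4}=:\eta_0$, and Lemma~\ref{lem:ortho.to.all} converts this pair of bounds into a quantum-ID code of error $\eta_0+2\sqrt{\eta_0}$, comfortably within $7\sqrt{\l/\m}\,\d^{1/8}$.
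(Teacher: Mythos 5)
Your proposal is correct, and the first two parts track the paper's proof exactly: the forward implication is obtained by rerunning the conjugate-bases/Renes/information-disturbance argument from the proof of Theorem~\ref{thm:fidelity.alternative} with guessing probability $1-\e$, and the middle implication is just the fidelity alternative restricted to $\tr_E$. Where you genuinely diverge is the third part. The paper's route is: extend the near-orthogonality $F(\ph^B,\ps^B)\leq 4\sqrt{2\d}$ from pure $\ps\perp\ph$ to all mixed states $\s$ supported on $S\cap\ph^\perp$ via Lemma~\ref{lemmixing.vs.fidelity} (at the cost of a factor $\l^2/\m^2$), apply Helstrom to each pair $(\ph^B,\s^B)$ to get a $\s$-dependent projector, and then invoke von Neumann's minimax theorem to extract a single $\ps$-independent effect $D_\ph$ before feeding it to Lemma~\ref{lem:ortho.to.all}. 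You instead write down a canonical $\ps$-independent candidate from the outset, $D_\ph=\Pi_{\ph^B}$, which makes the true-positive probability exactly $1$ and reduces everything to bounding $\tr\ps^B\Pi_{\ph^B}\leq\tr(\ps^B\ph^B)/\m\leq(\l/\m)\sqrt{F(\ph^B,\ps^B)}\leq 2(\l/\m)(2\d)^{1/4}$ via $\Pi_{\ph^B}\leq\ph^B/\m$ and $\|X\|_2^2\leq\|X\|_1\|X\|_\infty$ — and, crucially, this bound is only needed for \emph{pure} $\ps\perp\ph$, which is all that Lemma~\ref{lem:ortho.to.all} asks for. Your argument therefore dispenses with both the mixing lemma and the minimax theorem, arrives at the same intermediate constant $2(\l/\m)(2\d)^{1/4}$ as the paper obtains after the minimax step, and hence the same final bound $\eta_0+2\sqrt{\eta_0}\leq 3\sqrt{2}\,2^{1/8}\d^{1/8}\sqrt{\l/\m}\leq 7\d^{1/8}\sqrt{\l/\m}$ (trivially true when $\eta_0>1$). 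The eigenvalue hypothesis is used differently in the two proofs — the paper needs $\l/\m$ to control the fidelity of mixtures, you need $\m$ to dominate the support projector and $\l$ to bound $\|\sqrt{\ps^B}\sqrt{\ph^B}\|_\infty$ — but both uses are legitimate, and your version is the more elementary of the two. The one thing your shortcut does not obviously survive is the passage to Theorem~\ref{thm:ID.dualto.rand.tech}, where the relevant operators are truncated, subnormalized states whose supports are harder to pin down; there the paper's minimax machinery earns its keep.
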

\begin{remark}
While it would be desirable to eliminate the eigenvalue condition at
the end of the theorem, the condition is fairly natural in this
context. If the reduced states $\ph^E$ are very close to a single
state $\s^E$ for all $\ket{\ph} \in S$, then all the $\ket{\ph}$ are
very close to being purifications of $\s^E$, meaning that they
differ from one another only by a unitary plus a small perturbation.
If $\s^E$ is the maximally mixed state or close to it, then the
assumption will be satisfied.
\end{remark}

\medskip
\begin{proof}
For the first part, recall that if $S$ is a quantum-ID code with
error probability $\e$, then for each pure state $\ket{\ph}\in S$
there exists an operator $0 \leq D_\ph \leq \1$ on $B$ such that for
all pure states $\ket{\ph},\ket{\psi} \in S$,
  \[
    \bigl| \tr \psi^B D_\ph - \tr \psi\ph \bigr| \leq \epsilon.
  \]
Just as in the proof of Theorem~\ref{thm:fidelity.alternative}, the
hypothesis implies that data can be transmitted in two conjugate
bases with guessing probability $1-\e$. Running exactly the same
argument as was made in that proof gives that for all $\ket{\ph},
\ket{\ps} \in S$,
\begin{equation}
    \frac{1}{2} \big\| \ph^E - \ps^E  \big\|_1
    \leq 4 \sqrt{2} (2\e)^{1/4} \leq 7 \e^{1/4}.
\end{equation}

The second part is just a restatement of one direction of the
weak decoupling duality, but it is a useful step on the way to the
third part, which is more challenging since it requires the
construction of the decoder, that is, the operators $D_\ph$.

Indeed, given $\ket{\ph} \in S$, and arbitrary $\ket{\psi} \perp
\ket{\ph}$ in $S$, we learn from the second part that
\begin{equation} \label{eq:qid.b.tracenorm}
    \| \ph^B - \ps^B \|_1 \geq 2 - 4\sqrt{2\d}.
\end{equation}
By Helstrom's theorem on the optimal discrimination of $\ph^B$ and
$\ps^B$~\cite{helstrom:discrimination}, there exists a projector
$P_{\ph,\ps}$ on $B$ such that
\begin{equation}
  \label{eq:good.discr}
  \tr\ph^B P_{\ph,\ps} \geq 1-2\sqrt{2\delta},\quad
  \tr\ps^B P_{\ph,\ps} \leq 2\sqrt{2\delta}.
\end{equation}
The problem with using $P_{\ph,\ps}$ as the decoding is that this
projector may indeed depend not only on $\ph$, but also on $\psi$.
Since the goal is to find a single projector that the receiver can use
to identify $\ph$ that will work regardless of whether the input is $\ph$ 
or $\ps$, that is unacceptable.
Still, let us confirm first that if we manage to find one effect
operator $D_\ph$ that can deal with all $\ps$ at once, then by
Lemma~\ref{lem:ortho.to.all} we'll be done. Our strategy for doing
so will be to first extend Eq.~(\ref{eq:good.discr}) to all mixed
states orthogonal to $\ket{\ph}$ and supported on $S$, and then use
a minimax argument to extract a single operator independent of
$\ps$.

Lemma~\ref{lemmixing.vs.fidelity} in the Appendix
can be used directly to see that
for all mixed states $\sigma$ supported on $S$ and orthogonal to
$\ph$,
\begin{align*}
    F(\ph^B,\s^B)
    \leq \frac{\l^2}{\mu^2}\max F(\ph^B,\ps^B)
    \leq 4 \sqrt{2\d} \frac{\l^2}{\mu^2},
\end{align*}
where the maximization is over all $\ket{\ps} \in S$ orthogonal to
$\ket{\ph}$ and the second inequality is an application of
Eq.~(\ref{eq:F.vs.D}) to Eq.~(\ref{eq:qid.b.tracenorm}). Applying
Eq.~(\ref{eq:F.vs.D}) a second time gives
\begin{equation*}
    \frac{1}{2} \big\| \ph^B - \ps^B \big\|_1
    \geq 1 - 2(2\d)^{1/4} \frac{\l}{\m}.
\end{equation*}
Applying Helstrom's theorem to $\ph^B$ and $\s^B$ yields a projector
$P_\sigma$ with
  \[
    \tr\ph^B P_\sigma - \tr\sigma^B P_\sigma
    \geq 1 - 2(2\d)^{1/4} \frac{\l}{\m}.
  \]
Von Neumann's minimax theorem then ensures the existence of a saddle
point in the following two-player game~\cite{vonNeumann:spiele}
(see Ky Fan~\cite{kyfan} for a more general version). One player
selects $0\leq P \leq \1$ while the other player selects a state
$\sigma$ supported on $S$ and orthogonal to $\ph$. The strategy
spaces are therefore closed and convex. The payoff function is
$1-\tr\ph^B P + \tr\sigma^B P$, which is linear in each argument.
Thus, the minimax theorem guarantees that there exists an operator
$0 \leq D_\ph \leq \1$ such that for all $\sigma$ supported on $S$
and orthogonal to $\ph$,
\begin{align*}
    \tr\ph^B D_\ph &\geq 1 - 2(2\d)^{1/4} \frac{\l}{\m}, \\
    \tr\sigma^B  D_\ph &\leq2(2\d)^{1/4} \frac{\l}{\m},
\end{align*}
and applying Lemma~\ref{lem:ortho.to.all} finishes the proof.
\end{proof}

\medskip
Unfortunately, Theorem \ref{thm:ID.dualto.rand} is not quite strong
enough to prove our main result on the quantum identification
capacity. To control the ratio of the largest to smallest
eigenvalues of the coding states, we need to act on them by typical
projectors that cause a slight distortion. To accomodate this
complication, we will instead use the following slightly more
flexible version of the converse that behaves better with respect to
the distortion. In particular, the amount of distortion enters the
bound on the quality of the quantum-ID code in a term independent of
the eigenvalue constraint. That separation proves to be crucial
because the eigenvalues cannot be controlled independently of the
distortion.
%
%
\begin{theorem} \label{thm:ID.dualto.rand.tech}
Let $S \subseteq B \ox E$ be a subspace and $0 \leq X  \leq \1$ an
operator acting on $B \ox E$ such that $\tr(X \ph X^\dg) \geq 1 -
\e$ for all $\ket{\ph} \in  S$. For any
state $\ket{\o} \in S$, write $\tilde{\o} = X \o X^\dg$. If there
exists a state $\O$ such that
\begin{equation*}
  \forall \ket{\ph} \in S \quad
    \big\| \tilde\O^E - \tilde{\ph}^E \big\|_1 \leq \d
\end{equation*}
with $0 \leq \d, \e \leq 1/15$
and, in addition, the nonzero eigenvalues of $\tilde\O^E$ lie in the
interval $[\m,\l]$, then $S$ is an $\eta$-quantum-ID code for $\eta
:= 3( 30 \l \d / \m + 3 \sqrt{\e} + 4 \d )^{1/2}$.
\end{theorem}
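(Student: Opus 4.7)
The plan parallels the third part of Theorem~\ref{thm:ID.dualto.rand}, but tracks the small distortion introduced by $X$ carefully so that the concluding bound on $\eta$ depends \emph{additively} rather than multiplicatively on $\e$ and $\d$. The starting observation is a gentle-measurement estimate: since $0\leq X\leq\1$ is Hermitian and $\tr(X^2\ph)=\tr(X\ph X^\dg)\geq 1-\e$ for every $\ket{\ph}\in S$, one has $\|\ph-\tilde{\ph}\|_1\leq 2\sqrt{\e}$, a bound inherited by either marginal. Combining this with the triangle inequality through $\tilde{\O}^E$ gives $\|\ph^E-\ps^E\|_1\leq 2\d+4\sqrt{\e}$ for all $\ket{\ph},\ket{\ps}\in S$, which is approximate forgetfulness of the \emph{undistorted} $E$-marginals. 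The fidelity alternative (Theorem~\ref{thm:fidelity.alternative}), applied to the channel induced by the isometric embedding $S\hookrightarrow B\otimes E$, then yields approximate geometry preservation on $B$; in particular $F(\ph^B,\ps^B)$ is small whenever $\ket{\ph}$ and $\ket{\ps}$ are orthogonal in $S$.

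Next I would extend this discrimination from orthogonal pure states to arbitrary mixed states $\sigma$ supported on $S$ and orthogonal to $\ph$. This is the step that uses the spectral hypothesis on $\tilde{\O}^E$: a version of Lemma~\ref{lemmixing.vs.fidelity} applied in the distorted picture controls $F(\tilde{\ph}^B,\tilde{\sigma}^B)$ by $\l/\m$ times the worst orthogonal fidelity $F(\tilde{\ph}^B,\tilde{\ps}^B)$, plus an additive slack proportional to $\d$ that accounts for $\tilde{\sigma}^E$ being only an \emph{approximate} mixture of eigenvectors of $\tilde{\O}^E$. Transferring back to the undistorted $F(\ph^B,\sigma^B)$ via the gentle-measurement estimate produces a bound of the form $O((\l/\m)\d)+O(\sqrt{\e})+O(\d)$. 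Helstrom's theorem now provides, for each $\sigma$, a projector $P_\sigma$ with $\tr\ph^B P_\sigma - \tr\sigma^B P_\sigma$ close to one, and von Neumann's minimax theorem --- exactly as in the proof of Theorem~\ref{thm:ID.dualto.rand}, applied to the bilinear game with payoff $1-\tr\ph^B P+\tr\sigma^B P$ on the convex strategy spaces of operators $0\leq P\leq\1$ and of states $\sigma$ supported on $S$ and orthogonal to $\ph$ --- consolidates these into a single $D_\ph$ that distinguishes $\ph$ from every orthogonal input. Lemma~\ref{lem:ortho.to.all} then converts this uniform orthogonal discrimination into the desired $\eta$-quantum-ID code.

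The delicate step is the mixed-state extension. Lemma~\ref{lemmixing.vs.fidelity} as originally formulated expects spectral control of the \emph{exact} marginal, whereas our hypothesis controls only the distorted marginal $\tilde{\O}^E$; doing the fidelity bound in the distorted picture first, and invoking the gentle-measurement estimate only once at the end to transfer back to $\ph^B$, is what keeps the dependencies on $\l/\m$, $\d$, and $\sqrt{\e}$ separated, so that they appear additively in the final bound $\eta=3(30\l\d/\m+3\sqrt{\e}+4\d)^{1/2}$ instead of compounding through nested square roots. That additive separation is precisely what Theorem~\ref{thm:ID.dualto.rand} lacks and is exactly what the quantum-ID capacity theorem in Section~\ref{sec:qid.cap} will require.
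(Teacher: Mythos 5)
There is a genuine gap at the very first step, and it is fatal to the claimed bound. You route the orthogonal-pair discrimination through the fidelity alternative: forgetfulness of the $E$-marginals with parameter $\e' = 2\d + 4\sqrt{\e}$ gives, via Theorem~\ref{thm:fidelity.alternative} and Eq.~(\ref{eq:F.vs.D}), only $F(\ph^B,\ps^B) = O(\sqrt{\e'}) = O(\sqrt{\d} + \e^{1/4})$ for orthogonal $\ket{\ph},\ket{\ps}\in S$. But to end up with $\eta = 3(30\l\d/\m + \cdots)^{1/2}$ you need, after the $\l^2/\m^2$ amplification of Lemma~\ref{lemmixing.vs.fidelity} and one application of Fuchs--van de Graaf, a trace-distance lower bound of $1 - O(\l\d/\m)$; unwinding, this forces $F(\ph^B,\ps^B) = O(\d^2)$ for orthogonal pairs. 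A bound of order $\sqrt{\d}$ at this stage propagates to $\eta = O(\d^{1/8}\sqrt{\l/\m})$, i.e.\ exactly the scaling of Theorem~\ref{thm:ID.dualto.rand} that this theorem is designed to improve upon; your own closing paragraph promises to avoid the ``nested square roots,'' but your first step reintroduces them. Worse, by transferring to the undistorted marginals at the outset, the $\e^{1/4}$ term enters \emph{before} the $\l/\m$ amplification, whereas in the stated $\eta$ the $3\sqrt{\e}$ term is independent of $\l/\m$.

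The missing idea is to bypass the fidelity alternative entirely and bound the fidelity directly and quadratically in $\d$: apply the forgetfulness hypothesis to the four superpositions $\ket{\vartheta_\pm} = \smfrac{1}{\sqrt{2}}(\ket{\ph}\pm\ket{\ps})$ and $\ket{\chi_\pm} = \smfrac{1}{\sqrt{2}}(\ket{\ph}\pm i\ket{\ps})$, which yields $\|\tr_B\ket{\tilde\ph}\!\bra{\tilde\ps}\|_1 \leq 8\d$, and then invoke Lemma~\ref{lemnice} ($F(\ph^B,\ps^B) = \|\tr_B\ket{\ph}\!\bra{\ps}\|_1^2$) to conclude $F(\tilde\ph^B,\tilde\ps^B)\leq 64\d^2$ \emph{in the distorted picture}, with no $\e$-dependence at all. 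The rest of your outline (Lemma~\ref{lemmixing.vs.fidelity} applied to distorted states, Helstrom, minimax, a single late transfer back to $\ph^B$ via gentle measurement, Lemma~\ref{lem:ortho.to.all}) is structurally correct, though you also leave the spectral step vague: the hypothesis controls the eigenvalues of $\tilde\O^E$, not of $\tilde\ph^E$, and the way to fix this is an explicit truncation --- compare the ordered eigenvalue sequences of $\tilde\ph^E$ and $\tilde\O^E$, discard the indices where they differ by more than a factor of two (losing weight at most $2\d$), and apply Lemma~\ref{lemmixing.vs.fidelity} to the truncated operators, whose eigenvalues then lie in $[\m/2,3\l/2]$. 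Your appeal to ``a version of Lemma~\ref{lemmixing.vs.fidelity} \ldots plus an additive slack proportional to $\d$'' is not a lemma you have, and without the truncation construction it does not follow.
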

\begin{proof}
Let $\ket{\ph}$ and $\ket{\ps}$ be orthonormal states in $S$. We
will begin by showing that $\tilde\ph^B$ and $\tilde\ps^B$ can be
effectively distinguished. To this end, consider the states
\begin{align*}
  \ket{\vartheta_{\pm}} &= \frac{1}{\sqrt{2}}\ket{\ph} \pm \frac{1}{\sqrt{2}}\ket{\ps}, \\
  \ket{\chi_{\pm}}      &= \frac{1}{\sqrt{2}}\ket{\ph} \pm \frac{i}{\sqrt{2}}\ket{\ps},
\end{align*}
which form two orthogonal pairs. Then
\begin{align*}
  \tilde\vartheta^E_\pm &= \frac{1}{2}\tilde\ph^E + \frac{1}{2}{\tilde\ps}^E
    \pm \frac{1}{2}\bigl( \tr_B \ket{\tilde\ph}\!\bra{\tilde\ps} + \tr_B \ket{\tilde\ps}\!\bra{\tilde\ph} \bigr), \\
  \chi^E_\pm      &= \frac{1}{2}\tilde\ph^E + \frac{1}{2}{\tilde\ps}^E
    \mp \frac{i}{2}\bigl( \tr_B \ket{\tilde\ph}\!\bra{\tilde\ps} - \tr_B \ket{\tilde\ps}\!\bra{\tilde\ph} \bigr),
\end{align*}
and, by assumption,
\[
  \frac{1}{2}\| \tilde\vartheta^E_+ - \tilde\vartheta^E_- \|_1 \leq \d
     \quad\mbox{and}\quad
   \frac{1}{2}\| \tilde\chi^E_+ - \tilde\chi^E_- \|_1 \leq \d.
\]
Combining these relations reveals that $\| \tr_B
\ket{\tilde\ph}\!\bra{\tilde\ps} \pm \tr_B
\ket{\tilde\ps}\!\bra{\tilde\ph} \|_1 \leq 4\d$, hence by the
triangle inequality, $\| \tr_B \ket{\tilde\ph}\!\bra{\tilde\ps} \|_1
\leq 8\d$. But this gives us, by virtue of Lemma~\ref{lemnice},
\begin{equation} \label{eq:tilde.fidelity}
  F(\tilde\ph^B,{\tilde\ps}^B) \leq 64\delta^2.
\end{equation}
To proceed as in the proof of Theorem~\ref{thm:ID.dualto.rand}, we
need to show that any $\ket{\ph} \in S$ and mixed state $\s$
supported on the orthogonal complement of $\ket{\ph}$ in $S$ can
also be distinguished. In order to apply
Lemma~\ref{lemmixing.vs.fidelity} in the Appendix,
we will show that the largest and
smallest nonzero eigenvalues of $\ph^B$, or equivalently, $\ph^E$,
are well-behaved modulo a little bit of truncation. Indeed, let
$O = (O_j)$ and $p = (p_j)$ be the eigenvalues of $\tilde\O^E$ and
$\tilde\ph^E$, respectively, in nonincreasing order. Then
\begin{equation*}
    \big\| O - p \big\|_1 \leq \big\| \tilde\O^E - \tilde\ph^E \big\|_1 \leq \d.
\end{equation*}
Define the set
\begin{equation*}
    J = \big\{ j \, : \, (1-\g) p_j \leq O_j \leq (1+\g) p_j \big\}.
\end{equation*}
Then
\begin{equation*}
    \g \sum_{j \not\in J} p_j \leq \sum_{j \not\in J} |O_j - p_j| \leq \d,
\end{equation*}
implying that
\begin{equation*}
    \sum_{j \in J} p_j
    = \sum_j p_j - \sum_{j \not\in J} p_j
    \geq (1 - \e) -  \d/\g.
\end{equation*}
Fixing $\g = 1/2$ implies that for each $\ket{\ph} \in S$, there is
a positive semidefinite operator $\hat\ph^B \leq \tilde\ph^B$
satisfying $\tr \hat\ph^B \geq 1 - \e-2\d$ and whose eigenvalues lie
in the interval $[\m/2,3\l/2]$. 

Now let $\ket{\ph} \in S$ and consider any state $\s = \sum_i q_i
\ps_i$ whose support lies in the orthogonal complement of
$\ket{\ph}$ in $S$. Since the states $\ket{\psi_i}$ are in $S$, the truncation procedure of the previous paragraph
can be used to construct operators $\hat\psi_i$. Let $\hat\s = \sum_i q_i \hat\ps_i$. Then by
Lemma~\ref{lemmixing.vs.fidelity},
\begin{align*}
    F( \hat\ph^B, \hat\s^B )
    &\leq \frac{9\l^2}{\m^2} \max F(\hat\ph^B,\hat\ps^B ) \\
    &\leq \frac{9\l^2}{\m^2} \max F(\tilde\ph^B,\tilde\ps^B ) \\
    &\leq \frac{9\l^2}{\m^2} 64 \d^2
    = \frac{576\l^2 \d^2}{\m^2}.
\end{align*}
Both maximizations are over states $\ket{\ps} \in S$ such that
$\braket{\ph}{\ps} = 0$. The second inequality follows from the fact
that $\hat\ph^B \leq \tilde\ph^B$ (and likewise for $\ps$) along
with Lemma~\ref{lem:fidelity.op.mono} while the third arises by
substituting in the result of Eq.~(\ref{eq:tilde.fidelity}).
Introducing one last decoration for our states, let $\bar\ph^B =
\hat\ph^B / \tr \hat\ph^B$ and likewise for $\s$. Applying
Eq.~(\ref{eq:F.vs.D}) with attention paid to the fact that
$\hat\ph^B$ and $\hat\s^B$ are not normalized gives
\begin{equation*}
    \frac{1}{2} \big\| \bar\ph^B - \bar\s^B \big\|_1
    \geq 1 - \frac{24\l\d}{\m} \frac{1}{1-\e-2\d}
    \geq 1 - \frac{30\l\d}{\m},
\end{equation*}
where the final inequality uses that $\e \leq 1/15$. Applying
Helstrom's theorem to $\bar\ph^B$ and $\bar\s^B$ implies that there
exists a projector $P_\s$ such that
\begin{equation*}
    \tr\bar\ph^B P_\sigma - \tr\bar\sigma^B P_\sigma
    \geq 1 - \frac{30\l\d}{\m}.
\end{equation*}
Next we invoke von Neumann's minimax theorem, just as in the proof of
Theorem~\ref{thm:ID.dualto.rand}, for the payoff function $1 - \tr
\bar\ph^B P + \tr \hat\s^B P$, with the strategy space of the second
player the convex hull of the operators $\hat\ps^B$, where $\ket{\ps} \in
S$ ranges over states orthogonal to $\ket{\ph}$. (The operators $\hat\sigma^B$ are not normalized but that will not cause any difficulties.) This provides an
operator $0 \leq D_\ph \leq \1$ such that
\begin{align}
    \tr \bar\ph^B D_\ph &\geq 1 - \frac{30\l\d}{\m} \label{eq:truncated.game}
        \quad \mbox{and} \\
    \tr \hat\s^B D_\ph &\leq \frac{30\l\d}{\m}.
\end{align}
But
\begin{align*}
    \big| \tr \ph^B D_\ph &- \tr \bar\ph^B D_\ph \big| \\
    &\leq \| \ph^B - \bar\ph^B \|_1 \\
    &\leq \| \ph^B - \tilde\ph^B \|_1 +  \| \tilde\ph^B - \hat\ph^B \|_1 + \| \hat\ph^B - \bar\ph^B \|_1 \\
    &\leq 2 \sqrt{\e} + 2 \d  + \big| 1 - (1 - \e - 2\d) \big| \\
    &\leq 3 \sqrt{\e} + 4 \d,
\end{align*}
where the fourth line follows from the gentle measurement lemma
(Appendix, Lemma~\ref{lemma:gentle}), the definition of $\hat\ph^B$,
and the fact that $\bar\ph^B = \hat\ph^B /
(\tr\hat\ph^B)$. Similarly, for any $\hat\s^B = \sum_i q_i
\hat\ps_i^B$ a convex combination of states arising from
$\ket{\ps_i} \in S$ perpendicular to $\ket{\ph}$,
\begin{align*}
    \big| \tr \s^B D_\ph &- \tr \hat\s^B D_\ph \big| \\
    &\leq \| \s^B - \hat\s^B \|_1 \\
    &\leq \sum_i q_i \| \ps_i^B - \hat\ps_i^B \|_1 \\
    &\leq \sum_i q_i \left( \| \psi_i^B - \tilde\psi_i^B \|_1 + \| \tilde\psi_i^B - \hat\psi_i^B \|_1 \right) \\
    &\leq 2 \sqrt{\e} + 2 \d.
\end{align*}
Combining these estimates with the outcome of the minimax theorem in
Eq.~(\ref{eq:truncated.game}) and Lemma~\ref{lem:ortho.to.all}
completes the proof.
\end{proof}
%
%
%

\section{Quantum identification capacity} \label{sec:qid.cap}

While it might not be possible to design low error quantum-ID codes
for any given channel, the situation becomes more promising if many
uses of the channel are allowed. In analogy with classical and
quantum data transmission, we can define asymptotic quantum-ID codes
as follows.
%
%
\begin{definition}[Quantum-ID capacity~\cite{winter:q-ID-1}]
\label{defn:qid.am.code}
A rate $Q$ is said to be \emph{achievable for quantum identification 
over $\cN$} if for all $\e > 0$ and sufficiently large $n$, there
are $\e$-quantum-ID codes for $\cN^{\ox n}$ with encoding domain $S$
of dimension at least $2^{nQ}$. The \emph{quantum identification
capacity} $Q_{\rm ID}(\cN)$ is defined as the supremum of the
achievable rates.
\end{definition}

The capacity should be interpreted as the number of qubits that can
be identified per use of the channel $\cN$ in the limit of many uses
of the channel. The only nontrivial channel for which the quantum
identification capacity was known prior to this paper was the
identity channel: asymptotically, a noiseless qubit channel can be
used to identify two qubits. That is, $Q_{\rm ID}(\id_2) = 2$~\cite{winter:q-ID-1}.
As we will see below, the theory of the quantum identification capacity is
considerably simpler when the given channel $\cN$ can be used in
conjunction with noiseless channels to the receiver. This obviously
increases the capacity, so the interesting question is how much the
use of $\cN$ increases the quantum identification capacity over what
would have been achievable with the noiseless channels alone. When
defining the achievable amortized rates it is therefore necessary to
subtract off two qubits for every noiseless qubit channel used per
copy of $\cN$.
%
%
\begin{definition}[Amortized quantum-ID capacity] \label{defn:am.qid.code}
A rate $Q$ is said to be \emph{achievable for amortized quantum
identification over $\cN$} if for all $\e > 0$ and sufficiently
large $n$, there are $\e$-quantum-ID codes for $\id_C \ox \cN^{\ox
n}$ with encoding domain $S$ such that 
$Q \leq \smfrac{1}{n} (\log |S| - 2 \log |C|)$. where $\log=\log_2$ is
the binary logarithm throughout this paper. The \emph{amortized quantum
identification capacity} $Q_{\rm ID}^{\rm am}(\cN)$ is defined as
the supremum of the achievable rates.
\end{definition}
Readers familiar with the identification capacities of classical channels might be surprised to see that the dimension of a quantum-ID code scales only exponentially with the number of channel uses, as opposed to doubly exponentially. The essential difference between the classical and quantum settings is that the number of distinguishable quantum states in dimension $d$ already scales exponentially with $d$, which makes quantum identification a much more demanding task. Nonetheless, as we will see below, the amortized quantum identification capacity can be positive for some channels with zero quantum capacity, like the noiseless bit channel. One then finds that the dimension of the quantum-ID code \emph{can} scale super-exponentially with the number of qubits used to supplement the classical channel.

Weak decoupling duality is a very effective tool for studying the
quantum-ID capacities. As a warm-up, the fact that the complements
of quantum-ID codes are forgetful supplies a quick answer to an open
question from~\cite{winter:q-ID-1}:
%
%
\begin{theorem}
  \label{thm:Q.ID.zero}
  If $\cN$ is an antidegradable channel, that is, if there exists channel
  $\cT$ such that $\cN = \cT\circ\cN^c$, then $Q_{\rm ID}(\cN)=0$. This is
  true in particular for the noiseless cbit channel $\overline\id_2$.
  More generally, if the quantum capacity of the channel vanishes, $Q(\cN)=0$,
  then so does the quantum-ID capacity, $Q_{\rm ID}(\cN)=0$.
\end{theorem}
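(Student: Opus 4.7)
The plan is to argue by contradiction. If $Q_{\rm ID}(\cN) = R > 0$, then for every $\e > 0$ and sufficiently large $n$ there is an $\e$-quantum-ID code for $\cN^{\ox n}$ with encoding domain $S$ of dimension at least $2^{nR} \geq 2$. I will use the structural hypothesis on $\cN$ to force the effective channel $\cM_n := \cN^{\ox n} \circ \cE$ to contract the distance between orthogonal code vectors, violating the QID property once $\e$ is small enough.

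For the antidegradable case $\cN = \cT \circ \cN^c$, the key observation is that antidegradability is inherited by $\cM_n$ for every encoder $\cE$. Writing $E'$ for the Stinespring environment of $\cE$ and $E^{\ox n}$ for that of $\cN^{\ox n}$, one checks that $(\cN^c)^{\ox n} \circ \cE = \tr_{E'} \circ \cM_n^c$, so $\cM_n = \cT^{\ox n} \circ (\cN^c)^{\ox n} \circ \cE = \cT^{\ox n} \circ \tr_{E'} \circ \cM_n^c$ is a CPTP image of its own complement. Data processing in the trace norm therefore gives
\begin{equation*}
  \bigl\| \cM_n(\ph) - \cM_n(\ps) \bigr\|_1
   \leq \bigl\| \cM_n^c(\ph) - \cM_n^c(\ps) \bigr\|_1
\end{equation*}
for all $\ph,\ps \in S$. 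Choosing $\ket\ph \perp \ket\ps$ in $S$, the QID POVM $(D_\ph, \1 - D_\ph)$ forces the left-hand side to be at least $2 - 4\e$, while Theorem~\ref{thm:ID.dualto.rand} bounds the right-hand side by $14\e^{1/4}$; the resulting inequality $2 - 4\e \leq 14\e^{1/4}$ fails once $\e$ is small enough, contradicting $R > 0$. The noiseless cbit $\overline\id_2$ is covered because its canonical Stinespring dilation $\ket{j} \mapsto \ket{j}_B \ket{j}_E$ makes $\cN$ and $\cN^c$ coincide, so antidegradability holds trivially with $\cT = \id$.

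For the more general implication $Q(\cN) = 0 \Rightarrow Q_{\rm ID}(\cN) = 0$, antidegradability is unavailable and a different argument is required. The plan is to show the contrapositive via coherent information. From a positive-rate $\e$-QID code on $\cN^{\ox n}$, Theorem~\ref{thm:ID.dualto.rand} supplies a fixed state $\sigma$ with $\|\cM_n^c(\rho) - \sigma\|_1 = O(\e^{1/4})$ for every state $\rho$ supported on $S$, whence Fannes--Audenaert controls $S(\cM_n^c(\pi_S))$ in terms of $S(\sigma)$, while pairwise approximate distinguishability of an orthonormal basis of $S$ under $\cM_n$ furnishes a Holevo-type bound $S(\cM_n(\pi_S)) \geq (1-o(1))\log|S|$. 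Together these should force the coherent information $I_c(\pi_S,\cM_n)$ to grow linearly in $n$, contradicting $Q(\cN) = 0$ via the Lloyd--Shor--Devetak hashing theorem. The main obstacle is quantitative: the environment dimension of $\cM_n$ is exponential in $n$, so the crude bound $S(\sigma) \leq \log|\text{env}|$ does not suffice; obtaining a sharper bound on $S(\sigma)$, perhaps by exploiting decoupling of the reference from the environment in a purification of $\pi_S$, is the crux of the general case.
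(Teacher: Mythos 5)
Your antidegradable argument is correct and is essentially the paper's, made quantitative: the paper likewise observes that $\cN\circ\cE$ inherits antidegradability, so its complement could simulate it and would then be simultaneously forgetful and geometry-preserving, a contradiction. Your explicit chain $\cM_n=\cT^{\ox n}\circ\tr_{E'}\circ\cM_n^c$, the data-processing inequality, and the numerical clash $2-4\e\leq 14\e^{1/4}$ all check out, as does the remark about $\overline\id_2$.

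The general implication $Q(\cN)=0\Rightarrow Q_{\rm ID}(\cN)=0$ is where your proposal has a genuine gap, and the obstacle you flag is real and not merely technical. Your plan compares $S(\cM_n^c(\pi_S))$ to $S(\sigma)$ via Fannes, but nothing bounds $S(\sigma)$ itself below $nR$: the environment $E^nF$ has dimension exponential in $n$ and the forgetful output state can be (and in the paper's own code constructions \emph{is}) highly mixed, with entropy comparable to $n\log|E|$. Moreover, even the paper's entropic bookkeeping in the converse of Theorem~\ref{thm:Q.ID.capacities} only yields $I(A\rangle B^n)\geq -o(n)$, not strict positivity, so this route cannot by itself deliver $Q(\cN)>0$. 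The paper's actual argument sidesteps entropies entirely: restrict attention to \emph{any two-dimensional subspace} of the code. By Lemma~\ref{lemqID.fidelity} (or Theorem~\ref{thm:ID.dualto.rand}) the complementary channel restricted to that qubit is pairwise $\e'$-forgetful, and because the input dimension is now $2$, Lemma~\ref{lem:norm.bound} upgrades this to closeness to a constant channel in the \emph{diamond} norm at the cost of a factor of $2$. The information-disturbance theorem (Theorem~\ref{thm:info.disturb}) then produces a decoder correcting that qubit through $\cN^{\ox n}$ — every two-dimensional subspace of a quantum-ID code is an approximate quantum error-correcting code — and Lloyd--Shor--Devetak gives an input with $I(A^n\rangle B^n)>0$, hence $Q(\cN)>0$. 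The lesson is that you do not need the whole exponentially large code to witness positive quantum capacity; a single code qubit suffices, and on a qubit the weak (pairwise) and complete notions of forgetfulness coincide up to constants. If you replace your entropic sketch with this restriction-to-a-qubit step, the general case closes.
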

\begin{proof}
Given a quantum-ID code for the channel $\cN$ that
encodes as little as one qubit, the channel $\cN \circ \cE$ will be
geometry-preserving if $\cE$ is the encoding map. Hence, by weak decoupling duality, the channel complementary to
$\cN\circ\cE$ will be approximately forgetful. But if $\cN$ is antidegradable, then
so is $\cN \circ \cE$, meaning that the channel complementary to $\cN \circ \cE$ can
simulate $\cN \circ \cE$. But then the complementary channel would be simultaneously forgetful and
geometry-preserving, a contradiction.

For the more general statement, we show the contrapositive: assume
$Q_{\rm ID}(\cN) > 0$, then for all $\epsilon>0$ and sufficiently large
$n$, $\cN^{\ox n}$ has in particular a $2$-dimensional quantum-ID code $S$
which is $\epsilon$-close to being forgetful for the environment,
by Lemma~\ref{lemqID.fidelity}. But by Lemma~\ref{lem:norm.bound} this
means that the channel from the code qubit to the environment is
arbitrarily close to a constant map in the diamond norm. At this point
we can then invoke Theorem~\ref{thm:info.disturb} on 
information-disturbance~\cite{KretschmannWerner} to conclude that
the channel from the code qubit to $B^n$ can be arbitrarily well
error-corrected. (Note that this argument is following our proof of the
weak decoupling duality; in particular, any $2$-dimensional subspace
of a quantum-ID code, and in fact any subspace of sufficiently small dimension, is a quantum error-correcting code!) 
By the Lloyd-Shor-Devetak theorem on the quantum capacity 
(see~\cite{Devetak:Q}), this implies that there exists an input state
for which the coherent information $I(A^n\rangle B^n) > 0$ is
positive, and hence $Q(\cN) > 0$.
\end{proof}

\medskip
As usual, quantitative statements about
asymptotically achievable rates and
upper bounds on the identification capacities are naturally
expressed in terms of entropies. For a bipartite density matrix
$\ph^{AB}$, we write
\[H(A)_\ph \equiv H(\ph^A) \equiv -\tr\ph^A\log\ph^A\]
for the \emph{von Neumann entropy} of $\ph^A$.  The mutual
information of the state $\ph^{AB}$ is defined to be
\begin{equation*}
I(A:B)_\ph = H(A)_\ph + H(B)_\ph - H(AB)_\ph
\end{equation*}
while the coherent information and the conditional entropy are, respectively,
\begin{align*}
    I(A\rangle B)_\ph
    &= H(B)_\ph - H(AB)_\ph \\
    H(A|B)_\ph
    &= H(AB)_\ph - H(B)_\ph.
\end{align*}
Our main theorem on the quantum identification capacities includes a
concise formula for $Q_{\rm ID}^{\rm am}$ that eliminates the
optimization over multiple channel uses.
%
%
\begin{theorem}[Quantum identification capacity] \label{thm:Q.ID.capacities}
For any quantum channel $\cN$, its quantum-ID capacity is given by $Q_{\rm
ID}(\cN) = \sup_n \frac{1}{n}Q_{\rm ID}^{(1)}(\cN^{\ox n})$, where
\[
    Q_{\rm ID}^{(1)}(\cN) = \sup_{\ket{\ph}}
        \bigl\{ I(A:B)_\r \text{ s.t. } I(A \rangle B)_\r > 0 \bigr\},
\]
where $\ket{\ph}$ is the purification of any input state to $\cN$
and $\rho^{AB} = (\id\ox\cN)\ph$, and where we declare the $\sup$ to
be $0$ if the set above is empty.

\smallskip \noindent
Furthermore, the amortized quantum-ID capacity equals
\[
    Q_{\rm ID}^{\rm am}(\cN) = \sup_{\ket{\ph}} I(A:B)_\r = C_E(\cN),
\]
the entanglement-assisted classical capacity of $\cN$~\cite{BSST}.
\end{theorem}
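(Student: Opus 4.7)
The theorem comprises two claims: the regularized single-letter formula for $Q_{\rm ID}(\cN)$ and the closed-form identity $Q_{\rm ID}^{\rm am}(\cN) = C_E(\cN)$. Both will be derived from a single achievability construction, combined with regularization, the fidelity alternative (via Theorem~\ref{thm:ID.dualto.rand.tech}), and the fact that $C_E$ is single-letter and additive~\cite{BSST}. The central technical step is to show that every input state $\ket{\ph}^{AA'}$ with $I(A\rangle B)_\r > 0$ (where $\r = (\id\otimes\cN)\ph$) yields asymptotically achievable quantum-ID codes at rate $I(A:B)_\r$ per use of $\cN$.

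For the core construction, I would take $n$ copies of the Stinespring purification $\ket{\ph}^{ABE}$ and apply Theorem~\ref{thm:ID.dualto.rand.tech} with the following choices: let $X = \Pi_B \otimes \Pi_E$ be the product of typical-subspace projectors for $\ph^{B^{\otimes n}}$ and $\ph^{E^{\otimes n}}$, so that the gentle measurement lemma produces $\e = 2^{-\O(n)}$; take $\tilde\O^{E^n}$ to be the typical projection of $\ph^{E^{\otimes n}}$, whose nonzero eigenvalues all lie in $[\m,\l]$ with $\l/\m \leq 2^{o(n)}$; and take $S$ to be a Haar-random $K$-dimensional subspace inside the image of a suitable encoding isometry, with $\log K = n I(A:B)_\r - o(n)$. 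The key estimate, in the spirit of \cite{HHYW,merging}, is a concentration argument showing that with high probability over the Haar measure, every $\ket{\ph}\in S$ satisfies $\|\tilde\ph^{E^n} - \tilde\O^{E^n}\|_1 \leq \d = 2^{-\O(n)}$. The hypothesis $I(A\rangle B)_\r > 0$ ensures a positive-coherent-information ``budget'' that the random subspace can exploit to reach the mutual-information rate, and plugging these estimates into Theorem~\ref{thm:ID.dualto.rand.tech} produces a quantum-ID error $\eta = 3(30\l\d/\m + 3\sqrt{\e} + 4\d)^{1/2}$ vanishing as $n\to\infty$.

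The regularized formula for $Q_{\rm ID}(\cN)$ then follows: applied to $\cN^{\otimes k}$ with input $\ph^{\otimes k}$, the construction gives $Q_{\rm ID}(\cN) \geq Q_{\rm ID}^{(1)}(\cN^{\otimes k})/k$ for every $k$, while the opposite inequality is immediate from the definitions. For Part II, achievability $Q_{\rm ID}^{\rm am}(\cN)\geq C_E(\cN)$ applies the same construction to $\cM_n = \id_C\otimes\cN^{\otimes n}$ with input $\ket{\ph}^{\otimes n}\otimes\ket{\Phi^+}^{A''C}$, where $\ket{\Phi^+}$ is maximally entangled of dimension $|C| = 2^{n\a}$ with $\a > |I(A\rangle B)_\r|$: the noiseless side channel boosts the coherent information to $n(\a + I(A\rangle B)_\r) > 0$, the mutual information to $n I(A:B)_\r + 2n\a$, and the amortized rate $(\log K - 2\log|C|)/n$ equals $I(A:B)_\r$, whose supremum is $C_E(\cN)$. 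The converse $Q_{\rm ID}^{\rm am}(\cN) \leq C_E(\cN)$ uses Part I applied to $\cM_n$: any $\e$-code has $(1/n)\log|S| \leq Q_{\rm ID}^{(1)}(\cM_n) + o(1) \leq C_E(\cM_n) + o(1) = 2\log|C| + n C_E(\cN) + o(1)$, so the amortized rate is at most $C_E(\cN)$ by additivity of $C_E$.

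The main obstacle, I expect, is the concentration step in the central construction: carefully balancing the three error parameters so that the dominant term $\l\d/\m$ in Theorem~\ref{thm:ID.dualto.rand.tech} vanishes despite $\l/\m$ being sub-exponentially large. A secondary subtlety is that the converse for the amortized capacity flows through the regularized formula of Part I rather than a direct single-letter identity, and it is the additivity of $C_E$ under tensoring with a noiseless channel that closes the argument and eliminates any apparent factor-of-two mismatch between quantum-ID rates and entanglement-assisted classical capacities.
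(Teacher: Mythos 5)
Your achievability sketch follows the paper's route (Haar-random code subspace, typical projectors to control $\l/\m$ and the distortion $\e$, concentration of measure over the subspace, then Theorem~\ref{thm:ID.dualto.rand.tech}), and you correctly identify the delicate balance $\l\d/\m\to 0$; the paper realizes this with $\l/\m\leq 2^{2n\d}$ against forgetfulness $O(2^{-3n\d})$, and additionally needs an ancillary register $F$ discarded by the encoder so that $\log|S|$ can reach $n[H(A)+H(B)-H(E)]$, which exceeds $n H(A)$ when $I(A\rangle B)>0$ --- a point your ``suitable encoding isometry'' glosses over but which is not fatal.

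The genuine gap is the converse. You assert that $Q_{\rm ID}(\cN)\leq\sup_n\frac1n Q_{\rm ID}^{(1)}(\cN^{\ox n})$ is ``immediate from the definitions,'' but $Q_{\rm ID}^{(1)}$ is an entropic quantity, not an operational one: nothing in Definition~\ref{defn:qid.code} or \ref{defn:qid.am.code} relates the dimension of an arbitrary $\e$-quantum-ID code for $\cN^{\ox n}$ to a mutual information, let alone to one constrained by $I(A\rangle B)>0$. Your Part~II converse then inherits this hole, since it invokes ``Part~I applied to $\cM_n$'' as a black box. What is actually needed --- and what the paper supplies --- is a weak-converse argument: view the code as a subspace $S\subseteq\widehat{B}\ox\widehat{E}$ with $\widehat{B}=B^nC$, $\widehat{E}=E^nF$; use Theorem~\ref{thm:ID.dualto.rand} (identification implies forgetfulness of $\widehat{E}$) together with Fannes continuity to get $H(\widehat{B})\geq H(\widehat{B}|X)=H(\widehat{E}|X)=H(\widehat{E})-o(n)$ for any pure-state decomposition of the maximally mixed code state; feed half of a maximally entangled state on $S$ through the circuit to conclude $\log|S|\leq I(A:B^n)+2\log|C|+o(n)$ via the chain rule $I(A:\widehat{B})=I(A:B^n)+I(A:C|B^n)$. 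Finally, the strict-positivity constraint $I(A\rangle B)_\r>0$ in the formula for $Q_{\rm ID}^{(1)}$ requires a further step you omit entirely: the code only guarantees $I(A\rangle B^n)\geq -o(n)$, and the paper must pad the input with $O(\log n)$ copies of a positive-coherent-information state (which exists because $Q_{\rm ID}(\cN)>0$ forces $Q(\cN)>0$ by Theorem~\ref{thm:Q.ID.zero}) to make the coherent information strictly positive while perturbing all entropies by only $o(n)$. Without these elements the theorem's equalities are only proved as lower bounds.
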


\medskip
\begin{remark}
It follows from Theorem \ref{thm:Q.ID.capacities} that the amortized
quantum-ID capacity of a noiseless cbit channel is one. Reconciling
this observation with Theorem \ref{thm:Q.ID.zero}, which asserts
this channel's unamortized quantum-ID capacity is zero, reveals that
\emph{some} amortized noiseless quantum communication is necessary
to achieve $Q_{\rm ID}^{\rm am}$, without determining how much. In
fact, inspection of the proof of Theorem \ref{thm:Q.ID.capacities}
reveals that, for the noiseless cbit channel $\overline\id_2$, a
zero rate of noiseless side qubits is sufficient to achieve the
maximum value of one. These observations extend to cq-channels, so
named because they consist of a destructive measurement resulting in
classical information, followed by the preparation of a state
conditioned on the measurement outcome. For these channels, the
entanglement-assisted capacity $C_E$ is equal to the unassisted
classical capacity $C$, also known as the Holevo
capacity~\cite{Holevo98,SW97}. As a result, $Q_{\rm ID}(\cN)=0$ for
all such channels  even as $Q_{\rm ID}^{\rm am}(\cN) = C(\cN)$, the
latter strictly positive for all nontrivial channels. The difference
in all cases can be traced to a sublinear amount of free quantum
communication in the amortized setting.

This effect can be viewed as an instance of \emph{(un-)locking}
since the quantum-ID rate increases from strictly $0$ to an
arbitrarily large amount by the addition of any positive rate of
quantum communication,
cf.~\cite{I-locking,E-locking,ChristandlWinter}. Unlike the
previously known examples where a certain finite rate is always
required, however, here an arbitrarily small rate of extra quantum
communication is sufficient to bring about an unbounded increase in
the capacity.
\end{remark}

\medskip
The intuition behind the achievability of the rates in
Theorem~\ref{thm:Q.ID.capacities} is quite simple. The structure of
an amortized code is illustrated in Figure~\ref{fig:qid.code}. Fix a
state $\ket{\ph}$ purifying any input to the channel $\cN$ and let
$\ket{\r}^{ABE}$ be $(\1 \ox U_{\cN})\ket{\ph}$, where $U_{\cN}$ is
the Stinespring extension of $\cN$. The encoding will embed the
input into a random subspace of a typical subspace of $A^n$ tensored with
ancillary spaces $C$ and $F$, where $C$ will consist of the amortized
quantum communication and $F$ the environment for the encoding.
Since the encoding is into a random subspace, it will
produce states highly entangled between $B^n C$ and $E^n F$. By
arranging for $E^n F$ to be slightly smaller than $B^n C$ in the
appropriate sense, one ensures that the states are indistinguishable
on the environment $E^n F$. By the weak decoupling duality, they can therefore
be identified by Bob. Letting $R = \smfrac{1}{n} \log |C|$ and $f
= \smfrac{1}{n} \log |F|$, the condition ensuring that $E^n F$ be
``smaller'' than $B^n C$ is roughly
\begin{equation*}
    H(E)_\r + f < H(B)_\r + R,
\end{equation*}
so $f - R$ is chosen to be very slightly less than $H(B)_\r -
H(E)_\r$. Moreover, measure concentration for the choice of random
subspace will make it possible to choose the coding subspace almost
as large as the ambient space $A^n C F$, which in qubit terms has effective
size
\begin{align*}
    n H(A)_\r + nR + nf.
\end{align*}
The rate of the amortized code will therefore be
\begin{align*}
    H(A)_\r + R + f - 2R
    &= H(A)_\r + f - R \\
    &\approx H(A)_\r + H(B)_\r - H(E)_\r.
\end{align*}
Since $\r$ is pure, $H(E)_\r = H(AB)_\r$ which means that the rate is precisely
the mutual information.

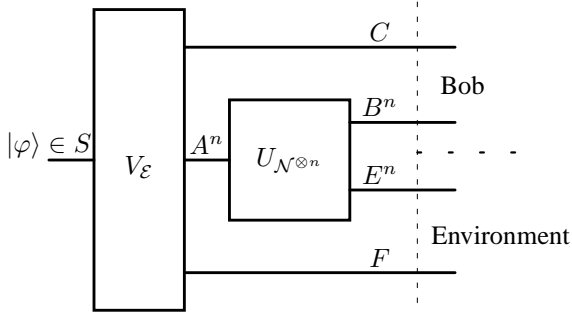
\begin{figure}[ht]
\begin{center}
\def\JPicScale{0.8}
\ifx\JPicScale\undefined\def\JPicScale{1}\fi
\unitlength \JPicScale mm
\begin{picture}(77.5,51.25)(0,0)
\Thicklines
\path(7.5,50)(22.5,50)(22.5,0)(7.5,0)(7.5,50)

\Thicklines
\path(30,35)(50,35)(50,15)(30,15)(30,35)

\put(68.75,37.5){\makebox(0,0)[cc]{Bob}}

\put(75,12.5){\makebox(0,0)[cc]{Environment}}

\Thicklines
\path(7.5,25)(0,25)

\Thicklines
\path(22.5,25)(30,25)

\Thicklines
\path(50,31.25)(67.5,31.25)

\Thicklines
\path(50,20)(67.5,20)

\Thicklines
\path(22.5,43.75)(67.5,43.75)

\Thicklines
\path(22.5,6.25)(67.5,6.25)

\put(40,25){\makebox(0,0)[cc]{$U_{\cN^{\ox n}}$}}

\put(15,23.75){\makebox(0,0)[cc]{$V_{\cE}$}}

\put(55,33.75){\makebox(0,0)[cc]{$B^n$}}

\put(55,22.5){\makebox(0,0)[cc]{$E^n$}}

\put(26.25,27.5){\makebox(0,0)[cc]{$A^n$}}

\put(55,46.25){\makebox(0,0)[cc]{$C$}}

\put(55,8.75){\makebox(0,0)[cc]{$F$}}

\thicklines
\dashline{1}(61.25,26.25)(77.5,26.25)

\thinlines
\dashline{1}(61.25,51.25)(61.25,1.25)

\put(0,27.5){\makebox(0,0)[cc]{$\ket{\ph} \in S$}}

\end{picture}
\end{center}
\caption{Structure of a quantum-ID code. $U_{\cN^{\ox
n}}$ and $V_{\cE}$ are the Stinespring extensions of the noisy
channel $\cN^{\ox n}$ and the encoding operation $\cE$. The
receiver, Bob, has access to the channel output $B^n$ as well as
$C$, which consists of $nR$ qubits transmitted noiselessly from the
receiver. (In the non-amortized setting, there is no $C$.) The
encoding map $\cE$ is generally noisy, so part of its output is
transmitted to the environment as $F$.} \label{fig:qid.code}
\end{figure}

The detailed proof of the achievability of the rates in
Theorem~\ref{thm:Q.ID.capacities} builds on the techniques developed
in Refs.~\cite{generic} and \cite{PopescuShortWinter:thermo}
analyzing the properties of generic quantum states. The proof will
combine the following theorem, originally motivated by the
foundations of statistical mechanics, with the duality between
quantum identification and approximate forgetfulness formulated in
Theorem \ref{thm:ID.dualto.rand} or, more precisely, its technical
variant Theorem \ref{thm:ID.dualto.rand.tech}.
%
%
\begin{theorem}[Random versus average states~\cite{PopescuShortWinter:thermo}] \label{thm:stat.mech}
Let $S$ be a subspace of $B \otimes E$, $\O$ be the maximally mixed
state on $S$, and $X$ any operator acting on $B \otimes E$ with $\|
X \|_\infty \leq 1$. If $\ket{\ph} \in S$ is chosen according to the
unitarily invariant measure, then for all $\e > 0$
\begin{equation*}
    \Pr\left\{ \left\|
         \tr_B X \O X^\dg - \tr_B X \ph X^\dg
    \right\|_1 \geq \eta \right\}
    \leq \eta'
\end{equation*}
where
\begin{align*}
    \eta
    &= \e + \sqrt{\tilde{d}_E / \tilde{d}_B } \quad \mbox{and} \\
    \eta'
    &= 2 \exp( -C \e^2 |S| ).
\end{align*}
Here $C > 0$ is a constant,  $\tilde{d}_E = |\operatorname{supp}
\tr_B XX^\dg|$ is an upper bound on the dimension of the support of
$\tr_B X \O X^\dg$ and $\tilde{d}_B = 1/\tr[ (\tr_E X \O X^\dg)^2 ]$
can be thought of as the effective dimension of $B$.
\end{theorem}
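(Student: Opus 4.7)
The plan is to reduce the statement to a second-moment calculation followed by a Levy concentration on the unit sphere of $S$. Write $\sigma_\ph := \tr_B X \ph X^\dg$ and $\sigma := \tr_B X \O X^\dg$, both positive operators on $E$. By unitary invariance of the measure on $S$, $\EE\, \ph = \O$, hence $\EE\, \sigma_\ph = \sigma$. I would bound $\EE\, \|\sigma_\ph - \sigma\|_1 \leq \sqrt{\tilde d_E/\tilde d_B}$ directly, and then invoke Levy's lemma applied to $\ket{\ph}\mapsto \|\sigma_\ph - \sigma\|_1$ to obtain the stated deviation probability.

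For the expectation, first observe that $\ph \leq \1_S$ implies $X\ph X^\dg \leq XX^\dg$, and partial trace preserves operator ordering, so the support of $\sigma_\ph$ lies inside the support of $\tr_B XX^\dg$, which has dimension $\tilde d_E$. Both $\sigma$ and $\sigma_\ph$ therefore live in a common $\tilde d_E$-dimensional subspace of $E$, and Cauchy--Schwarz gives $\|\sigma_\ph - \sigma\|_1 \leq \sqrt{\tilde d_E}\,\|\sigma_\ph - \sigma\|_2$. The variance is then
\[
\EE\,\|\sigma_\ph - \sigma\|_2^2 = \EE\,\tr\sigma_\ph^2 - \tr\sigma^2,
\]
which I would compute via the swap trick
\[
\tr\sigma_\ph^2 = \tr\bigl[(X\ox X)(\ph \ox \ph)(X^\dg \ox X^\dg)(F_E \ox \1_{BB})\bigr],
\]
together with the standard identity $\EE\,\ph \ox \ph = (\Pi_S \ox \Pi_S + F_S)/(|S|(|S|+1))$, where $\Pi_S$ projects onto $S$ and $F_S$ is the swap on $S\ox S$. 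The $\Pi_S\ox\Pi_S$ piece reproduces $|S|^2 \tr\sigma^2$, which after dividing by $|S|(|S|+1)$ yields $\tr\sigma^2$ up to a $1/(|S|+1)$ correction. The $F_S$ piece, using $F_S = \Pi_S^{\otimes 2} (F_B \ox F_E)$ and $F_E^2 = \1$, collapses to a term bounded by $\tr[(\tr_E X\O X^\dg)^2]/|S| = 1/(|S|\tilde d_B)$, exploiting $\|X\|_\infty \leq 1$. Thus $\EE\,\|\sigma_\ph - \sigma\|_2^2 = O(1/(|S|\tilde d_B))$, and combining with the rank bound yields $\EE\,\|\sigma_\ph - \sigma\|_1 \leq \sqrt{\tilde d_E/\tilde d_B}$ (for sufficiently large $|S|$, which can be absorbed into the $\e$ slack).

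For concentration, the function $\ket{\ph}\mapsto \|\sigma_\ph - \sigma\|_1$ is Lipschitz on the unit sphere of $S$ with a universal constant: contractivity of partial trace under the trace norm and the triangle inequality give
\[
\bigl|\|\sigma_\ph - \sigma\|_1 - \|\sigma_{\ph'} - \sigma\|_1\bigr|
 \leq \|X(\ph - \ph')X^\dg\|_1 \leq 2\|X\|_\infty^2\,\bigl\|\ket{\ph} - \ket{\ph'}\bigr\|.
\]
Levy's lemma on the sphere of the $|S|$-dimensional complex Hilbert space then gives
\[
\Pr\bigl\{\|\sigma_\ph - \sigma\|_1 \geq \EE\,\|\sigma_\ph - \sigma\|_1 + \e\bigr\}
 \leq 2\exp(-C\e^2|S|)
\]
for a universal constant $C>0$, and combining with the previous expectation bound completes the argument.

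The main obstacle is the algebraic bookkeeping in the second-moment step: one must carefully track how the swap $F_S$, the projector $\Pi_S$, and the operators $X\ox X$ interact with the $E$-swap $F_E$ so that the cross terms collapse precisely to the purity $\tr[(\tr_E X\O X^\dg)^2] = 1/\tilde d_B$, rather than, say, an $\|X\|$-dependent quantity. Once this identity is extracted, the rank-bound Cauchy--Schwarz and the standard sphere-concentration estimate dovetail routinely into the claim.
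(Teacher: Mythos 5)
Your proposal is correct and follows essentially the same route as the paper, which simply defers to Popescu--Short--Winter's Theorem~2: the second-moment bound $\EE\|\sigma_\ph-\sigma\|_1\leq\sqrt{\tilde{d}_E}\,(\EE\|\sigma_\ph-\sigma\|_2^2)^{1/2}$ via the symmetric-subspace identity for $\EE\,\ph\ox\ph$, followed by L\'evy concentration using that $\rho\mapsto X\rho X^\dg$ is contractive when $\|X\|_\infty\leq 1$ (this is exactly the ``slight modification'' the paper's proof alludes to). One bookkeeping slip: since $\Pi_S=|S|\,\O$, the swap term contributes $\frac{|S|}{|S|+1}\tr[(\tr_E X\O X^\dg)^2]\approx 1/\tilde{d}_B$ to $\EE\tr\sigma_\ph^2$, not $1/(|S|\tilde{d}_B)$; the correct value still gives $\EE\|\sigma_\ph-\sigma\|_2^2\leq 1/\tilde{d}_B$ and hence exactly the claimed $\sqrt{\tilde{d}_E/\tilde{d}_B}$, so the conclusion is unaffected.
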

\begin{proof}
This is a slight modification of~\cite[Thm.~2]{PopescuShortWinter:thermo}.
In the original, the theorem
bounds $\| \tr_B \O - \tr_B \ph\|_1$ under similar hypotheses but
$\eta$ includes a correction dependent on $\tr X \Omega X^\dg$. The
correction disappears if the argument is applied to $\| \tr_B X \O
X^\dg - \tr_B X \ph X^\dg \|_1$ instead under the assumption that
$\|X\|_\infty \leq 1$, which ensures that the map $\rho \mapsto X
\rho X^\dg$ is 1-Lipschitz.
\end{proof}

\medskip
In order to use Theorem~\ref{thm:stat.mech} to make statements about
random subspaces, we will use the following lemma
%
%
\begin{lemma} \label{lem:random.subspace}
Let $f$ be a real-valued function on  $\CC P^d$ (identified with
rank one projectors acting on $\CC^d$) and suppose that $f$ is
$\a$-Lipschitz with respect to the trace norm. Let $\mu$ be the
unitarily invariant measure on $\CC P^d$ and $\hat{\mu}$ the
unitarily invariant measure on the space of $k$-dimensional
subspaces of $\CC^d$. If
\begin{equation*}
    \mu\left\{ \ket{\xi} ; f(\xi) > \eta \right\} \leq g(d)
\end{equation*}
then
\begin{equation*}
    \hat\mu\left\{ S ;
        \max_{\ket{\xi} \in S, \braket{\xi}{\xi}=1} f(\xi) > (1+\a) \eta \right\}
    \leq \left( \frac{5}{\eta} \right)^{2k} g(d).
\end{equation*}
\end{lemma}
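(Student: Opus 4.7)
The plan is a standard net-plus-union-bound argument. First, I would fix a random $k$-dimensional subspace $S$ and reduce the supremum $\max_{\ket{\xi} \in S} f(\xi)$ over the unit sphere of $S$ to a maximum over a finite $\eta$-net. Concretely, in any $k$-dimensional complex Hilbert space there is an $\eta$-net $\cN_\eta$ for the unit sphere, measured in the trace distance between the rank-one projectors $\ketbra{\xi}{\xi}$, of cardinality at most $(5/\eta)^{2k}$. (This is the usual volumetric estimate on $\CC^k \cong \RR^{2k}$, transferred from Euclidean to trace distance at the cost of slightly adjusting the constant; the value $5$ is standard.) By the $\a$-Lipschitz hypothesis, if $f(\xi_0) \leq \eta$ for every $\ket{\xi_0} \in \cN_\eta$, then for any unit $\ket{\xi} \in S$ with closest net point $\ket{\xi_0}$,
\[
  f(\xi) \leq f(\xi_0) + \a \|\proj{\xi} - \proj{\xi_0}\|_1 \leq \eta + \a\eta = (1+\a)\eta.
\]
Thus the bad event $\{\max_{\ket{\xi}\in S} f(\xi) > (1+\a)\eta\}$ is contained in the event that some net point $\ket{\xi_0} \in \cN_\eta$ satisfies $f(\xi_0) > \eta$.

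Second, I would transport the net from the abstract space $\CC^k$ onto the random $S$ using an isometry $U:\CC^k \hookrightarrow \CC^d$ whose image is $S$, where $U$ is distributed according to the Haar measure on isometries. The point is that if $\cN_\eta^{(0)} \subset \CC^k$ is a fixed deterministic $\eta$-net of the unit sphere of $\CC^k$, then $U(\cN_\eta^{(0)})$ is an $\eta$-net in $S$ of the same cardinality; and by left-invariance of the Haar measure, for each fixed $\ket{v} \in \cN_\eta^{(0)}$, the image $U\ket{v}$ is a uniformly random unit vector in $\CC^d$, i.e.\ distributed according to $\mu$. Consequently, for each fixed net point the hypothesis gives $\Pr\{f(U v) > \eta\} \leq g(d)$.

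Finally, a union bound over the at most $(5/\eta)^{2k}$ net points yields
\[
  \hat\mu\bigl\{ S \,;\, \max_{\ket{\xi}\in S} f(\xi) > (1+\a)\eta \bigr\}
  \leq |\cN_\eta^{(0)}| \cdot g(d) \leq \left(\frac{5}{\eta}\right)^{2k} g(d),
\]
which is the claim. The only step that requires any real care is establishing the $(5/\eta)^{2k}$ net bound in the correct (trace-norm) metric; this is a routine volume-comparison calculation, converting a Euclidean $\eta/2$-net on the unit sphere of $\CC^k$ to a trace-norm $\eta$-net on projectors using $\|\proj{\xi}-\proj{\xi'}\|_1 \leq 2\|\ket{\xi}-\ket{\xi'}\|_2$, so I do not anticipate any genuine obstacle.
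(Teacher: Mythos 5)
Your proposal is correct and follows essentially the same route as the paper's proof: a trace-norm $\eta$-net of cardinality $(5/\eta)^{2k}$ on a fixed $k$-dimensional reference subspace, the Lipschitz bound to pass from the net to the whole unit sphere (giving the $(1+\a)\eta$ threshold), transport onto the random subspace by a Haar-distributed unitary/isometry, and a union bound over net points. The paper simply cites its Ref.~[rand] for the net cardinality rather than sketching the volume argument, but otherwise the two arguments coincide.
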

\begin{proof}
This is a standard discretization argument. Fix a $k$-dimensional
subspace $S_0 \subseteq \CC^d$. According to Ref.~\cite{rand}, there
is a trace norm $\eta$-net $M$ for the rank one projectors on $S_0$
of cardinality no more than $(5/\eta)^{2k}$. If $U$ is distributed
according to the Haar measure $\nu$, then $U S_0$ is distributed
according to the unitarily invariant measure. So, we find by the
triangle inequality that
\begin{align*}
    &\!\!\!\!
     \hat\mu\left\{ S ; \max_{\ket{\xi} \in S, \braket{\xi}{\xi}=1}
        f(\xi) > (1+\a) \eta \right\} \\
    &= \nu\left\{ U; \max_{\ket{\xi} \in S_0, \braket{\xi}{\xi}=1}
        f(U\xi U^\dg) > (1+\a) \eta \right\}  \\
    &\leq \nu\left\{ U;  \max_{\ket{\xi} \in M, \braket{\xi}{\xi}=1}
        f(U\xi U^\dg) >  \eta \right\} \\
    &\leq  \left( \frac{5}{\eta} \right)^{2k}
        \mu\{ \ket{\xi} ; f(\xi) > \eta \},
\end{align*}
where the second inequality is just the union bound over elements of
the net.
\end{proof}
%
%
%

\medskip
The following theorem collects the facts we will need about type and
typical projectors. We omit their definitions, which will not be
required here and can be found in Ref.~\cite{fqsw}.
%
%
\begin{theorem}[Typicality] \label{thm:typicality}
Let $\ket{\rho} \in A \ox B \ox E$ and set $\ket{\ps} =
\ket{\rho}^{\ox n}$. For any $\d, \e > 0$ sufficiently small there
exist projectors $\Pi^B$, $\Pi_1^E$ and $\Pi_2^E$ on $B^{\ox n}$ and
$E^{\ox n}$, respectively, and a projection $\Pi_t^A$ onto a fixed
type subspace of $A^n$ such that the states
\begin{align*}
    \ket{\ps_t} &= \frac{\Pi_t^A \ox \1^B \ox \1^E \ket{\ps}}
                    { \sqrt{ \bra{\ps} \Pi_t^A \ox \1^B \ox \1^E \ket{\ps}}}
    \quad \mbox{and} \\
    \ket{\tilde\ps_t} &= \frac{\Pi_t^A \ox \Pi^B \ox \Pi_2^E \Pi_1^E \ket{\ps}}
                    { \sqrt{ \bra{\ps} \Pi_t^A \ox \1^B \ox \1^E \ket{\ps}}}
\end{align*}
satisfy the following conditions for $X = A^n, B^n, E^n$ and
sufficiently large $n$:
\begin{enumerate}
\item $\ps_t^{A^n} = \Pi_t^A / {\rank \Pi_t^A}$.
\item $\| \ps_t - \tilde\ps_t \|_1 \leq \e$.
\item $\tr [ (\tilde\ps_t^X)^2 ] \leq 3(1-3\e)^{-1} 2^{-n[H(X)_\rho - c\d]}$.
\item $2^{n[H(X)_\rho - \d]} \leq \rank \Pi^X \leq 2^{n[H(X)_\rho+\d]}$.
\item The largest eigenvalue of $\tilde\ps_t^{E^n}$ is bounded
    above by $(1-3\e)^{-1} 2^{-n[H(E)_\r -c\d]}$.
\item The ratio of the largest to the smallest nonzero eigenvalue
    of $\tilde\ps_t^{E^n}$ is at most $2^{2n\d}$.
\end{enumerate}
where $\Pi^A$ and $\Pi^E$ should respectively be understood to be
$\Pi_t^A$ and $\Pi_2^E \Pi_1^E$  in property 4, and $c >0$ is a
constant.
\end{theorem}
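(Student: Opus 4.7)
I would build the projectors from the spectral decomposition of $\rho^A$. Write $\rho^A = \sum_i p_i\ket{i}\bra{i}$ and purify as $\ket{\rho}^{ABE}=\sum_i\sqrt{p_i}\ket{i}\ket{\phi_i}$, where the orthonormality of $\{\ket{\phi_i}\}$ is forced by the diagonality of $\rho^A$. Let $\ket{\psi}=\ket{\rho}^{\otimes n}$, and let $\Pi_t^A=\sum_{i^n\in T_t}\ket{i^n}\bra{i^n}$ be the projector onto a type class whose empirical distribution is within $\delta$ of $(p_i)$. The method of types yields $|T_t|\in[2^{n(H(A)_\rho-c\delta)},2^{n(H(A)_\rho+c\delta)}]$ and $|T_t|\,p^t\geq n^{-|A|}$, where $p^t$ is the common value of $p_{i^n}$ on $T_t$. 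Since $p_{i^n}$ is constant on $T_t$, normalization gives $\ket{\psi_t}=|T_t|^{-1/2}\sum_{i^n\in T_t}\ket{i^n}\ket{\phi_{i^n}}$, and a direct partial trace (using orthonormality of $\{\ket{\phi_{i^n}}\}$) yields Property~1. Take $\Pi^B$ and $\Pi_1^E$ to be the weakly typical projectors for $\rho^B$ and $\rho^E$, and let $\Pi_2^E$ be the strongly typical (frequency) projector for $\rho^E$, chosen so that $\Pi_2^E\leq\Pi_1^E$; then $\Pi_2^E\Pi_1^E=\Pi_2^E$ has the required rank $\approx 2^{nH(E)_\rho}$. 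Standard type-counting establishes Property~4, and on the image of $\Pi_2^E$ the eigenvalues of $(\rho^E)^{\otimes n}$ lie in $[2^{-n(H(E)_\rho+\delta)},2^{-n(H(E)_\rho-\delta)}]$.

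The key tool for the remaining properties is the operator inequality $\psi_t^{X^n}\leq(\rho^X)^{\otimes n}/\Pr(t)$ for $X\in\{B,E\}$, which holds because $\Pr(t)\,\psi_t^{X^n}$ is one summand of the convex decomposition $(\rho^X)^{\otimes n}=\sum_{t'}\Pr(t')\psi_{t'}^{X^n}$. Together with the polynomial lower bound on $\Pr(t)$ and the exponentially small weight of $(\rho^X)^{\otimes n}$ off the typical subspace, this delivers $\tr\Pi\,\psi_t^{X^n}\geq 1-\e_0$ for each $\Pi\in\{\Pi^B,\Pi_1^E,\Pi_2^E\}$ with $\e_0$ as small as desired; iterated application of the gentle measurement lemma (Lemma~\ref{lemma:gentle}) to $\ket{\psi_t}$ then yields Property~2. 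For Property~5, direct expansion gives
\[
\tilde\psi_t^{E^n}=c^{-2}\,\Pi_2^E\Pi_1^E\,N\,\Pi_1^E\Pi_2^E,\qquad N:=\tr_{B^n}\bigl[(\Pi^B\otimes\1)\,\psi_t^{B^n E^n}\,(\Pi^B\otimes\1)\bigr],
\]
with $c^2\geq 1-3\e$. A Gram-matrix / Schmidt argument shows $N\leq\psi_t^{E^n}$, so combining with $\psi_t^{E^n}\leq(\rho^E)^{\otimes n}/\Pr(t)$ and $\|\Pi_1^E(\rho^E)^{\otimes n}\Pi_1^E\|_\infty\leq 2^{-n(H(E)_\rho-\delta)}$, and absorbing $n^{|A|}$ into the $\delta$-slack, establishes Property~5. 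Property~3 for $X=E^n$ follows from $\tr[(\tilde\psi_t^{E^n})^2]\leq\|\tilde\psi_t^{E^n}\|_\infty$; the parallel argument with $\Pi^B$ in place of $\Pi_2^E\Pi_1^E$ handles $X=B^n$, and $X=A^n$ is immediate from Property~1 and trace-norm closeness.

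The main obstacle is Property~6, the matching lower bound on the nonzero eigenvalues of $\tilde\psi_t^{E^n}$. The key input is that $\ket{\tilde\psi_t}$ inherits the diagonal $S_n$-symmetry of $\ket{\psi_t}$, since all four projectors $\Pi_t^A$, $\Pi^B$, $\Pi_1^E$, $\Pi_2^E$ commute with the natural action of the symmetric group on each factor. Combined with the flatness of $(\rho^E)^{\otimes n}$ on the image of $\Pi_2^E$ (max-to-min ratio already bounded by $2^{2n\delta}$ by strong typicality), this symmetry should allow one to upgrade the one-sided bound $\Pi_2^E\Pi_1^E\,N\,\Pi_1^E\Pi_2^E\leq\alpha\,\Pi_2^E$ to a two-sided sandwich on the support of $\tilde\psi_t^{E^n}$, with the ratio of upper to lower constant bounded by $2^{2n\delta}$. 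Establishing this lower sandwich---in particular showing that $N$ is bounded from below by a multiple of $\Pi_2^E(\rho^E)^{\otimes n}\Pi_2^E/\Pr(t)$ on the relevant subspace---will be the technical heart of the argument and is the only step where the strong typicality of $\Pi_2^E$ is essential.
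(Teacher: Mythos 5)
Your treatment of Properties 1--5 follows the standard route (it essentially re-derives the result of Ref.~\cite{fqsw} that the paper simply cites, with $\Pi^B$ and $\Pi_1^E$ typical projectors and the operator inequality $\ps_t^{X} \leq (\rho^X)^{\ox n}/\Pr(t)$ doing the work), and that part is fine. The genuine gap is Property~6, which is the one piece of this theorem that goes beyond the cited result, and which you explicitly leave unproven. Your plan is to take $\Pi_2^E$ to be a strong-typicality projector for $(\rho^E)^{\ox n}$ and then argue that permutation symmetry plus the flatness of $(\rho^E)^{\ox n}$ on its image yields a two-sided spectral sandwich on $\tilde\ps_t^{E^n}$. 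This does not work as stated: all of your operator inequalities ($N \leq \ps_t^{E^n} \leq (\rho^E)^{\ox n}/\Pr(t)$) bound eigenvalues only from \emph{above}, and permutation invariance of $\tilde\ps_t^{E^n}$ says nothing about its spectrum being flat --- permutation-invariant states can have arbitrarily skewed spectra. A projector can capture almost all of the trace of a state (so gentle measurement applies) while the projected state still has nonzero eigenvalues that are arbitrarily small; nothing in your setup rules this out, so the lower half of the sandwich is not just missing a proof, it is not a consequence of the ingredients you have assembled.

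The paper sidesteps this entirely by \emph{not} tying $\Pi_2^E$ to $(\rho^E)^{\ox n}$. Instead it first forms the state $\ket{\xi}$ obtained by applying $\Pi_t^A \ox \Pi^B \ox \Pi_1^E$, and then defines $\Pi_2^E$ as the spectral projector of $\xi^{E^n}$ onto the eigenspaces whose eigenvalues exceed $2^{-2n\d}/\rank \xi^{E^n}$. By the elementary Lemma~\ref{lem:little.eig}, the total weight of the discarded eigenvalues is at most $2^{-2n\d}/(1-3\e)$, so Property~2 survives after a redefinition of $\e$, while the max-to-min eigenvalue ratio of what remains is at most $2^{2n\d}$ (up to the polynomial-in-the-exponent slack already absorbed into $\d$) essentially by construction. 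If you want to salvage your approach, replace your strong-typicality choice of $\Pi_2^E$ with this spectral-cutoff construction applied to the post-projection state; the rest of your argument then goes through unchanged.
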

\begin{proof}
If $\Pi_2^E$ is removed and property 6 omitted, then the theorem is
precisely a result proved in Ref.~\cite{fqsw}, with $\Pi_1^E$ the
typical projector for $\r$ on $E^n$. $\Pi_2^E$ will be a projector
that removes all eigenvalues of the reduced density operator on
$E^n$ below the stated threshold. Let
\begin{equation*}
    \ket{\xi} = \frac{\Pi_t^A \ox \Pi^B \ox \Pi_1^E \ket{\ps}}
                    { \sqrt{ \bra{\ps} \Pi_t^A \ox \1^B \ox \1^E \ket{\ps}}}
\end{equation*}
The largest eigenvalue of $\xi^{E^n}$ is bounded above by
$(1-3\e)^{-1} 2^{-n[H(E)_\r -c\d]}$ according to property 5 as
stated above and the state's rank is at most $2^{n[H(E)_\r+\d]}$ by
property 4. Applying Lemma \ref{lem:little.eig} to the eigenvalues
of $\xi^{E^n}$ reveals that the sum of all eigenvalues less than or
equal to $2^{-2n\d} / \rank \xi^{E^n}$ is at most
\begin{equation*}
    \frac{2^{-2n\d}}{1-3\e} \leq 2^{-n\d}
\end{equation*}
for sufficiently large $n$. We can therefore let $\Pi_2^E$ be the
orthogonal projection onto the direct sum of the eigenspaces of
$\xi^{E^n}$ corresponding to eigenvalues larger than $2^{-2n\d} /
\rank \xi^{E^n}$. Let $\l$ be the largest eigenvalue of $\xi^{E^n}$.
The ratio of the largest to the smallest eigenvalue after the
application of $\Pi_2^E$ will be at most
\begin{equation*}
    \frac{\l}{2^{-2n\d} / \rank \tilde\ps_t^{E^n}}
    \leq \frac{\l}{2^{-2n\d} \cdot \l}
    = 2^{2n\d}.
\end{equation*}
A redefinition of $\e$ completes the proof.
\end{proof}

\medskip
%
%
\begin{proof+}{(Direct coding part of Theorem \ref{thm:Q.ID.capacities})}
The regular and amortized cases can be handled simultaneously. Fix
an input state $\ph$ as in Theorem \ref{thm:Q.ID.capacities}, let
$\ket{\rho}^{ABE}$ be a purification of $(\id \ox \cN)\ph$ and let
$\ket{\ps} = \ket{\rho}^{\ox n}$. To construct the code, we will
need to project $\ps^{A^n}$ to a type subspace having favorable
properties. $\ps_t^{A^nB^n}$ is the Choi-Jamiolkowski state for the
channel $\cN^{\ox n}$ restricted to the type subspace $A_t$ defined
by the projector $\Pi_t^A$. Call this channel $\cN_t$, write $U_t$
for its Stinespring dilation, and consider $\cN_t \ox \id^C \ox
\id^F$. $C$ will play the role of the noiseless channel from Alice
to Bob in the case of the amortized capacity and $F$ will represent
quantum information discarded by Alice at the encoding stage. Our
code will consist of a subspace of $S'$ of $A_t \ox C \ox F$
selected according to the unitarily invariant measure, which then
defines a subspace $S$ of $(B^n \ox C) \ox (E^n \ox F)$. Our aim
will be to show that $S$ is likely to be approximately forgetful for
$E^n \ox F$ when $C$ and $F$ are chosen appropriately, allowing for
an application of Theorem \ref{thm:ID.dualto.rand.tech}.

Let $\O = \ps_t^{B^nE^n}\ox \pi^C \ox \pi^F$ be the image under $U_t
\ox \1^C \ox \1^F$ of the maximally mixed state on $A_t \ox C \ox
F$. (Recall that $\pi^Z$ denotes the maximally mixed state on $Z$.)
Define $\ket{\tilde{\ps}_t}$ as in Theorem~\ref{thm:typicality} and
let $\tilde{\O} = \tilde{\ps}_t^{B^nE^n}\ox \pi^C \ox \pi^F $. Then
\begin{equation*}
    \tilde{\ps}_t^{B^nE^n} = (\Pi^B \ox \Pi_2\Pi_1^E) \ps_t^{E^nB^n}
            (\Pi^B  \ox \Pi_1^E \Pi_2^E)
\end{equation*}
so for $X = \Pi^B \ox \Pi_2^E\Pi_1^E$, Theorem~\ref{thm:stat.mech}
states that a randomly chosen state $\ket{\o}$ in $U_t(A_t) \ox C
\ox F$ will satisfy
\begin{equation*}
    \Pr\left[ \left\| \tilde{\O}^{E^nF} - \tilde\o^{E^nF}  \right\|_1
                \geq \frac{\eta}{2} \right]
    \leq \eta'
\end{equation*}
for $\tilde\o = X \o X^\dg$ and where, for any $\nu > 0$,
\begin{align*}
    \frac{\eta}{2}
    &= \nu + \sqrt{{\rank [ \Pi_2^E\Pi_1^E \ox \1^F]}\cdot
        {\tr [(\tilde\ps_t^{B^n} \ox \pi^C)^2]}}, \\
    \eta' &= 2 \exp\big(-C \nu^2 |A_t \ox C \ox F| \big).
\end{align*}
We will fix $\nu$ to be $\nu = 2^{-3n\d}$. So by
Lemma~\ref{lem:random.subspace}, a random $S$ in $U_t(A_t) \ox C \ox
F$ chosen according to the unitarily invariant measure will satisfy
\begin{multline*}
    \Pr_S \left[ \max_{\ket{\o} \in S}
        \left\| \tilde{\O}^{E^nF} - \tilde{\o}^{E^nF} \right\|_1 \geq \eta \right]  \\
    \leq 2 \left(\frac{10}{\eta}\right)^{2|S|} \exp\big(-C \nu^2 |A_t \ox C \ox F| \big)
\end{multline*}
since the function $\o \mapsto \| \tilde\O^{E^n F} - \tilde\o^{E^n
F} \|_1$ is 1-Lipschitz with respect to the trace norm. For
convenience, let $|F| = 2^{nf}$ and $|C| = 2^{nR}$.  Since
$|A_t| \geq 2^{n[H(A)_\r-\d]}$, choosing $|S|$ to be
$2^{n[H(A)_\r+R+f-8\d]}$ will lead to
\begin{equation} \label{eq:approx.forget.cap}
      \max_{\ket{\o} \in S}
        \left\| \tilde{\O}^{E^nF} - \tilde{\o}^{E^nF} \right\|_1 < \eta
\end{equation}
with high probability for sufficiently large $n$ provided $\eta$
decays  at most exponentially with $n$.

Now let us determine how to choose $f$ and $R$ in order to ensure a
small value for $\eta$. Observe that by properties 3 and 4 in
Theorem~\ref{thm:typicality},
\begin{align*}
    \rank \Pi_2^E\Pi_1^E\Pi_1^E \ox \1^F
        &\leq 2^{n[H(E)_\r +\d + f]}
            \quad \mbox{and}\\
    \tr [(\tilde\ps_t^{B^n }\ox \smfrac{1}{|C|}\1^C)^2]
        &\leq  3 (1-3\e)^{-1} \cdot 2^{-n[H(B)_\r - c\d - R]}.
\end{align*}
Therefore,
\begin{equation*}
\eta \leq \nu + 3 \cdot 2^{n[H(E)_\r - H(B)_\r +f - R +(1+c)\d ]/2}
\end{equation*}
provided $\e$ is chosen smaller than $1/15$.
There are two cases to consider:

 {\it Case 1.}
 First suppose that $I(A \rangle B)_\r > 0$ or, equivalently, that
$H(E)_\r < H(B)_\r$. Under these circumstances, amortization is not
required. Choosing $R=0$ and $f = H(B)_\r - H(E)_\r - (7+c)\d$ leads
to $\eta \leq \nu + 3 \cdot 2^{-3n\d}  \leq 4 \cdot 2^{-3n\d}$. The
rate of the associated code will be
\begin{align*}
    Q
    &=  \frac{1}{n}\log |S| \\
    &= H(A)_\r +R+f-8\d  \\
    &= H(A)_\r+ H(B)_\r - H(E)_\r - (7+c)\d - 8\d \\
    &= I(A:B)_\r -(15+c)\d.
\end{align*}

{\it Case 2.} Now suppose that $I(A\rangle B)_\r \leq 0$ so that
$H(E)_\r \geq H(B)_\r$. In this case we set $R=H(E)_\r - H(B)_\r +
(7+c)\d$ and $f=0$ to again achieve $\eta \leq  4 \cdot 2^{-3n\d}$.
This time, the rate of the code will be
\begin{align*}
    Q
    &= \frac{1}{n} \log |S| - 2R \\
    &= H(A)_\r + R + f - 8\d - 2R \\
    &= H(A)_\r + H(B)_\r - H(E)_\r - (15+c)\d \\
    &= I(A:B)_\r - (15+c)\d.
\end{align*}

We have established that the subspace $S$ corresponds to a code of
the correct rate. Applying Theorem \ref{thm:ID.dualto.rand.tech} to
$\tilde\O$ and the states in $S$ with $X = \Pi^B \ox \1^C \ox
\Pi_2^E \Pi_1^E \ox \1^F$ will complete the proof. Recalling that
the ratio of the largest to the smallest nonzero eigenvalues of
$\tilde\O^{E^nF}$ is at most $2^{2n\d}$, the theorem asserts that
$S$ is a quantum-ID code with error probability at most
\begin{equation*}
    3 \left( 30 \cdot 2^{2n\d} \cdot (4 \cdot 2^{-3n\d} ) + 4 \sqrt{\e} \right)^{1/2},
\end{equation*}
which can be made arbitrarily small for sufficiently large $n$.
\end{proof+}

\medskip
%
%
\begin{proof+}{(Converse for Theorem \ref{thm:Q.ID.capacities})}
We will address the regular and amortized capacities at the
same time. Consider an amortized quantum-ID code for $n$ copies of
$\cN$ as illustrated in Figure~\ref{fig:qid.code}. The Stinespring
dilations of $\cN^{\ox n}$ and $\cE$ together have three output
registers: one for the channel input, one for the transmission to
Bob and one going to the environment. Abbreviating $\widehat{B} = B^n C$
and $\widehat{E} = E^n F$ in Figure~\ref{fig:qid.code},
the quantum-ID code is
equivalent to a subspace $S \subseteq \widehat{B} \ox \widehat{E}$,
and we can apply our lemmas.

A key observation is that for any pure state ensemble $\{ p_x,\ph_x \}$ 
on $S$ decomposing the maximally mixed state,
\begin{equation}
  \label{eq:entropy-relation}
  H(\widehat{B}) \geq H(\widehat{B}|X) = H(\widehat{E}|X) = H(\widehat{E}) - o(n).
\end{equation}
The first inequality is just the concavity of the entropy function
while the first equality follows from the fact that $\ph_x$ is pure
on $\widehat{B}\widehat{E}$. The final relation is a consequence of
Theorem~\ref{thm:ID.dualto.rand}: the weak decoupling duality implies
that if states can be identified on $\widehat{B}$ then they must be
indistinguishable on $\widehat{E}$. Continuity of the von Neumann
entropy in the form of the Fannes inequality~\cite{Fannes} shows the
correction to be $o(n)$. Thus, sending one half of a maximally
entangled state $\Phi^{AS}$ between $S$ and an auxiliary space
named $A$ into the circuit of Figure~\ref{fig:qid.code},
we obtain a multipartite pure 
state $\Psi^{A\widehat{B}\widehat{E}}$ with respect to which
\[\begin{split}
  \log |A| = H(A) &\leq H(A) + H(\widehat{B}) - H(\widehat{E}) + o(n) \\
                  &= I(A:\widehat{B}) + o(n)     \\
                  &=    I(A:B^n) + I(A:C|B^n) + o(n) \\
                  &\leq I(A:B^n) + 2\log |C| + o(n).
\end{split}\]
Therefore, the amortized quantum identification capacity is bounded
above by $\lim_{n\rar\infty} \smfrac{1}{n} g(\cN^{\ox n})$ where
$g(\cN) = \max_{\ket{\ph}} I(A:B)_\r$ for $\r = (\id \ox \cN)\ph$.
It is well-known, however, that $g(\cN^{\ox n}) = n g(\cN)$ so the
limit is not necessary~\cite{BSST}.

On the other hand, in the non-amortized case, $|C| = 1$, and
the rate of the code is bounded above by $\frac{1}{n}I(A:B^n) + o(1)$.
On the other hand, Eq.~(\ref{eq:entropy-relation}) above yields
\begin{equation}
  \label{eq:I_coh_positive}
  I(A\rangle B^n) = I(A\rangle \widehat{B}) = H(\widehat{B}) - H(\widehat{E}) \geq -o(n),
\end{equation}
which is almost what we need, except that the claim of 
Theorem~\ref{thm:Q.ID.capacities} requires strictly
positive coherent information. We will achieve this by 
modifying the input state in such a way that the coherent
information becomes strictly positive and all other
entropic quantities change only by a sublinear amount (in $n$).

To this end, note that if $Q_{\rm ID}(\cN)=0$ there is nothing to prove,
so we shall assume $Q_{\rm ID}(\cN) > 0$, in which case  $Q(\cN) > 0$
by Theorem~\ref{thm:Q.ID.zero}. Hence, fix a $k$ and the purification 
$\ket{\phi}$ of an appropriate input state to $\cN^{\ox k}$,
such that with respect to $\sigma^{A'B^k} = (\id\ox\cN^{\ox k})\phi$,
$I(A'\rangle B^k)_\sigma \geq 1$, and let 
$\ell = \bigl\lceil \max\{0,-I(A\rangle B^n)_\Psi\} \bigr\rceil + 1$.
Hence, considering block length $N = n+k\ell$ and
the input state $\Phi \ox \phi^{\ox\ell}$ to $\cN^{\ox N}$,
resulting in the state
$\omega^{AA'^\ell B^N} = (\id \ox \cN^{\ox N})(\Psi\ox\phi^{\ox\ell})$,
with respect to which we have
\begin{align*}
  \log|A|           &\leq I(A:B^n) + o(n) \leq I(AA'^\ell:B^N) + o(N), \\
  I(AA'\rangle B^N) &\geq 1 > 0.
\end{align*}
As $N=n+o(n)$, this shows indeed that 
$\sup_n \frac{1}{n}Q_{\rm ID}^{(1)}(\cN^{\ox n})$
is an upper bound on all achievable rates.
\end{proof+}
%
%
%

\section{Non-triviality of amortization rates} 
\label{sec:amortization-rate}
It isn't clear from Theorem~\ref{thm:Q.ID.capacities} alone
what amortization rates $\alpha = \frac{1}{n}\log|C|$ are
necessary to achieve the amortized quantum identification
capacity $Q_{\rm ID}^{\rm am}$ of a given channel.
The previous section established that it is in general impossible
to do entirely without amortization, although an asymptotically zero rate may
suffice to close the gap between $Q_{\rm ID}$ and $Q_{\rm ID}^{\rm am}$,
as is the case for the noiseless cbit channel discussed earlier and, by
similar reasoning, for 
all cq-channels, for which
$Q_{\rm ID}^{\rm am} = C_E = C$, the ordinary classical capacity.

\medskip
To exhibit a channel that requires non-zero asymptotic rate of
amortization to achieve $Q_{\rm ID}^{\rm am}$,
we consider the qubit-erasure channel
$\cE_p:\cL(A) \rightarrow \cL(B)$, with $A=\CC^2$ and $B=\CC^3$,
\[
  \cE_p(\rho) = (1-p)\rho+p\proj{\ast},
\]
for $0\leq p \leq 1$. From Theorems~\ref{thm:Q.ID.capacities} 
and~\ref{thm:Q.ID.zero} we find readily:
\begin{align*}
  Q_{\rm ID}(\cE_p) &= \begin{cases}
                         2(1-p) & \text{ for } 0 \leq p < \frac12, \\
                         0      & \text{ for } \frac12 \leq p \leq 1,
                       \end{cases}                                 \\
  Q_{\rm ID}^{\rm am}(\cE_p) &= 2(1-p).
\end{align*}
Furthermore, for $p<\frac12$, no amortization is necessary, because
the maximum quantum mutual information is attained on the
maximally mixed input, for which $I(A\rangle B) = 1-2p > 0$.
On the other hand, the above shows that for $p\geq\frac12$,
\emph{some} amortization is necessary, although
Theorem~\ref{thm:Q.ID.capacities} does not immediately give bounds on
the rate $\alpha$, except that $\alpha = \max\{2p-1,0\}$
is sufficient, and that for  $p=\frac12$ some amortization, albeit at
zero rate, is necessary and sufficient. The situation is
clarified by the following theorem.

\begin{theorem}
\label{thm:main}
To achieve $Q_{\rm ID}^{\rm am}(\cE_p)$
for $\frac{1}{2} \leq p < 1$, an asymptotic 
amortization rate of at least
\[
  \alpha \geq 2p-1
\]
is necessary and sufficient. At $p=\frac12$, zero rate, but positive
amortization is necessary and sufficient; for $p<\frac12$ and $p=1$,
no amortization is required.
\end{theorem}

\medskip
To prepare the ground, let us 
look first at a single use of the erasure channel with $p > \frac12$ and
an input state $\rho$. Then
\[
  I(A:B) = S(A) + I(A\rangle B)
         = S(\rho) + (1-2p)S(\rho).
\]
The coherent information is always negative, except for 
pure $\rho$. In addition, an amortization rate of $(2p-1)S(\rho)$
is sufficient. 

For $n$ uses of the erasure channel, and a general input state
$\rho$ on $A^n$, 
\begin{align}
  \label{eq:I_n}
  I(A^n:B^n)        &= S(A^n) + I(A^n\rangle B^n), \text{ with} \\
  I(A^n\rangle B^n) &= -S(A^n|B^n) \nonumber \\
  \label{eq:Icoh_n}
                    &\!\!\!\!\!\!\!\!\!\!
                     = \sum_{J\subseteq [n]} p^{|J|} (1-p)^{n-|J|} \bigl( S(J^c)-S(J) \bigr),
\end{align}
where $J^c = [n]\setminus J$ and $S(J)$ is a shorthand for $S(A^J)$.
We know already that the right hand side in Eq.~(\ref{eq:Icoh_n}) is
non-positive (assuming $p\geq\frac12$, as we shall do from now on).
And since any noiseless qubit can only increase the coherent
information by at most $1$, while on the other hand for a
quantum-ID code we need positive coherent information,
we obtain that an amortization rate of $\frac{1}{n} S(A^n|B^n)$
is necessary. 
Motivated by the right hand side of Eq.~(\ref{eq:Icoh_n}), we view
$J\subseteq [n]$ as a random variable describing $n$ Bernoulli trials,
with associated probability $p^{|J|} (1-p)^{n-|J|}$, 
so that
\[
  I(A^n\rangle B^n) = \EE\bigl( S(J^c)-S(J) \bigr),
\]
and using $S(J) = S(A^n) - S(J^c) + I(J:J^c)$, 
\begin{equation}
  \label{eq:mutual-info}
  I(A^n:B^n) = \EE\bigl( 2 S(J^c) - I(J:J^c) \bigr).
\end{equation}

Our strategy will be to develop a lower bound on
the conditional entropy $S(A^n|B^n)$ for all $\rho$ such that
$I(A^n:B^n) \geq 2(1-p)n - \epsilon n$. Here, $\epsilon>0$ is
an arbitrarily small asymptotic parameter, which we let
converge to zero as $n\rightarrow\infty$.

\begin{lemma}
\label{lemma:global}
Under the assumption that
$I(A^n:B^n) \geq 2(1-p)n-\epsilon n$, and for any $L \subseteq J^c$,
\begin{align}
  \label{eq:expected-entropies}
  \EE\bigl( |J^c| - S(J^c) \bigr) &\leq \epsilon n, \\
  \label{eq:expected-entropies-sub}
  \EE\bigl( |L| - S(L) \bigr)     &\leq \epsilon n, \\
  \label{eq:expected-difference}
  \EE\bigl( |J^c| - S(J^c) + I(J:J^c) \bigr) &\leq \epsilon n.
\end{align}
\end{lemma}
\begin{proof}
Eq.~(\ref{eq:expected-entropies}) follows from Eq.~(\ref{eq:expected-difference}).
The latter in turn is seen by comparing Eq.~(\ref{eq:mutual-info})
with $\EE |J^c| = (1-p)n$:
\[\begin{split}
  \epsilon n &\geq 2(1-p)n - I(A^n:B^n) \\
             &=    \EE\bigl( 2|J^c| - 2S(J^c) + I(J:J^c) \bigr) \\
             &\geq \EE\bigl( |J^c| - S(J^c) + I(J:J^c) \bigr) . 
\end{split}\]
Finally, Eq.~(\ref{eq:expected-entropies-sub}), follows
by subtracting at most $S(J^c\setminus L) \leq |J^c|-|L|$ from
$S(J^c)$ in Eq.~(\ref{eq:expected-entropies}), and taking expectations.
\end{proof}

\medskip
The above says that for typical $J$, the entropies $S(J^c)$
are $\approx (1-p)n$, which is almost as large as they can
be, since with high probability, $|J^c| \approx (1-p)n$;
furthermore, $S(J)$ must be of the same order, and the
mutual information $I(J:J^c)$ between blocks $J^c$ and $J$
is small. However, $J$ is typically larger than $J^c$
(being of size $pn$ and $(1-p)n$, respectively), so we can
find several random $J_i^c$ in $J$, which will result in a lower
bound on the entropy of $J$. 

\begin{lemma}
\label{lemma:local}
Consider a random subset $J \subseteq [n]$ distributed
according to $p^{|J|} (1-p)^{n-|J|}$, and
let $k=\left\lceil \frac{p}{1-p} \right\rceil$.
Then, for sufficiently large $n$,
\[
  \EE S(J) \geq pn - k\epsilon n - 1.
\]
\end{lemma}
\begin{proof}
Define random subsets $J_1,\ldots,J_k \subseteq [n]$ with the following
distribution: for $|J| < \frac12 n$, let them be independent
and uniformly chosen from the subsets of size $|J|$, and for
$|J| \geq \frac12 n$ choose $K_1,\ldots,K_k \subseteq J$ with
$|K_i|=|J^c|$ such that $\left| \bigcup_i K_i \right|$ is as
large as possible (i.e.~either $|J|$ or $k|J^c|$, whichever
is smaller); then let $J_i^c := \pi(K_i)$ for a uniformly
random permutation $\pi$ of $J$.

Note that each $J_i$ has the same Bernoulli distribution as $J$,
but that the complements $J^c,J_1^c,\ldots,J_k^c$ are ``as disjoint
as possible.'' 

Now,
\[\begin{split}
  S(J) &=    S(J\cap J_1^c) + S(J\cap J_1) - I(J_1^c:J\cap J_1) \\
       &\geq \bigl( S(J_1^c) - I(J_1:J_1^c) \bigr) + S(J\cap J_1) \\
       &\geq \bigl( S(J_1^c) - I(J_1:J_1^c) \bigr)  \\
       &\phantom{=}
              + \bigl( S(J \cap J_1 \cap J_2^c) - I(J_2:J_2^c) \bigr) + S(J \cap J_1 \cap J_2) \\
       &\geq \sum_{i=1}^k \bigl( S(J_i^c \cap J \cap J_1 \cap \ldots \cap J_{i-1}) - I(J_i:J_i^c) \bigr),
\end{split}\]
and taking expectations, using Eq.~(\ref{eq:expected-difference}),
results in
\[
  \EE S(J) \geq \EE\left| J \cap \bigcup_{i=1}^k J_i^c \right| - k\epsilon n.
\]
Noting that $\bigcup_{i=1}^k J_i^c = J$ 
except with exponentially small probability (in $n$), the claim follows for sufficiently
large $n$.
\end{proof}

\medskip
\begin{proof+}{of Theorem~\ref{thm:main}}
We simply put together the bounds in Lemmas~\ref{lemma:global}
and~\ref{lemma:local}:
\[\begin{split}
  \EE\bigl( S(J^c)-S(J) \bigr) &\leq \EE|J^c| - \EE S(J) \\
                               &\leq n(1-p) - pn + k\epsilon n + 1,
\end{split}\]
with $k=\left\lceil \frac{p}{1-p} \right\rceil$.
For $n\rightarrow \infty$ and $\epsilon\rightarrow 0$ this yields
the claim.

The other parts of the theorem we knew already.
\end{proof+}

\medskip
\begin{remark}
It is interesting to note that as $p\rightarrow 1$, while the
capacity $2(1-p) \rightarrow 0$, the amortization cost
$2p-1 \rightarrow 1$, despite the fact that at $p=1$, the
capacity and naturally also the amortization cost are zero.
The minimal amortization cost required to achieve $Q_{\rm ID}^{\rm am}$ 
is therefore not a continuous function of the channel.
\end{remark}

\medskip
The above method can even be applied to prove
bounds on the rate-amortizaton tradeoff. Note that because the
classical capacity of $\cE_p$ is $C(\cE_p) = 1-p$, at that rate
of identification we can do with amortization at zero rate.
It seems reasonable to conjecture that for quantum
identification rates between $1-p$ and $2(1-p)$, a strictly positive
amortization rate is necessary.

\begin{figure}[ht]
  \includegraphics[width=8.5cm]{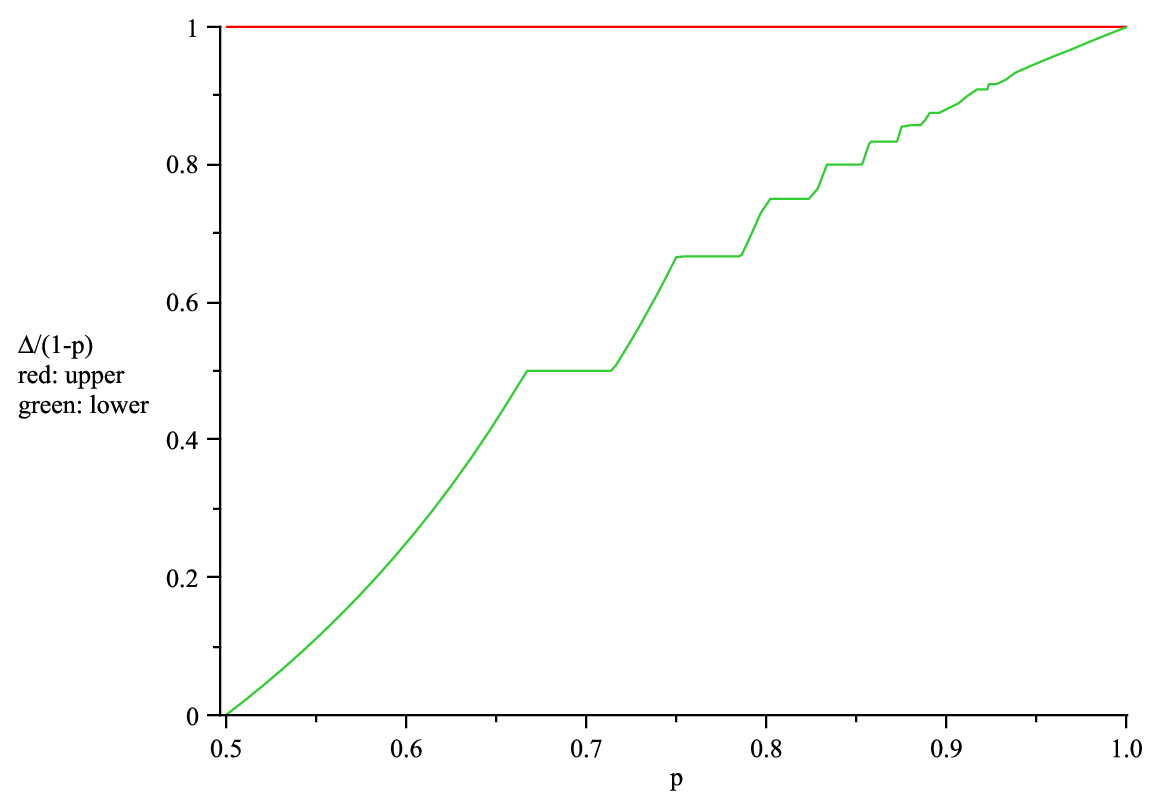}
  \caption{Plot of $p$ versus $\frac{\Delta}{1-p}$ for the minimum $\Delta$
      so that the amortization rate $\alpha=0$: the red line
      (constant $1$) corresponds to the bound mentioned above that at
      rate $R=1-p$ zero amortization rate is sufficient. The green
      plot is the lower bound on $\Delta$ from Theorem~\ref{thm:tradeoff}.}
\end{figure}

\begin{theorem}
\label{thm:tradeoff}
To achieve an amortized quantum identification rate $R=2(1-p)-\Delta$
asymptotically, for $\frac{1}{2} \leq p < 1$, an amortization rate of
at least
\[\begin{split}
  \alpha &\geq 2p-1 - \left\lceil \frac{p}{1-p} \right\rceil \Delta \\
         &\geq 2p-1 - \frac{\Delta}{1-p}
\end{split}\]
is necessary. Another, sometimes better, lower bound is
\[\begin{split}
  \alpha &\geq \left\lfloor \frac{2p-1}{1-p} \right\rfloor (1-p-\Delta) - \Delta \\
         &\geq 1-p-2\Delta\quad \text{ (for }p\geq 2/3).
\end{split}\]
\end{theorem}
\begin{proof}
This parallels the proof of Theorem~\ref{thm:main}, except that
Eq.~(\ref{eq:expected-difference}) is replaced by
\[
  \EE\bigl( |J^c| - S(J^c) + I(J:J^c) \bigr) \leq (\Delta+\epsilon)n.
\]
This implies, for $k < \left\lceil \frac{p}{1-p} \right\rceil$,
\[
  \EE S(J) \geq k(1-p)n - k\Delta n - k\epsilon n - 1,
\]
and for $k = \left\lceil \frac{p}{1-p} \right\rceil$,
\[
  \EE S(J) \geq pn - k\Delta n - k\epsilon n - 1.
\]
The rest of the argument is the same.
\end{proof}

\section{Conclusion and open questions} \label{sec:conclusions}

Weak decoupling duality is the statement that geometry preservation and
approximate forgetfulness are complementary properties, much like
quantum data transmission and complete forgetfulness. Subject to
some technical conditions, geometry preservation is itself
equivalent to quantum identification, an operational task very much
in the spirit of quantum data transmission but strictly weaker. Just
as analyzing complete forgetfulness has proved a versatile and
effective tool for studying asymptotic quantum error correction,
approximate forgetfulness provides a new approach to asymptotic
quantum identification. Indeed, by focusing on approximate
forgetfulness of the complementary channel, we have established that
the amortized quantum identification capacity is exactly equal to
the entanglement-assisted capacity.

The weak decoupling duality suggests a number of possible extensions,
such as asking what happens if geometry is preserved not only for
pure states but for higher rank mixed states. Would such a property
have an operational interpretation and corresponding interpretation
in terms of a form of forgetfulness intermediate between the weak
form studied here and complete forgetfulness? It would also be
interesting to understand geometry preservation as a type of
pseudo-isometry~\cite{mostow} from projective space to the
Grassmannian of subspaces corresponding to the supports of the mixed
output states. 

Meanwhile, Theorem~\ref{thm:Q.ID.capacities} poses an entertaining
and potentially deep puzzle: why do amortized quantum identification
and entanglement-assisted classical communication result in the same
capacity in the absence of any known operational relationship
between these tasks? The theorem also leaves open the important
problem of evaluating the quantum identification capacity formula in
the unamortized case (we expect that to be difficult as it includes
deciding whether the quantum capacity is positive).
We also left open precisely how much amortized quantum communication 
is necessary to achieve the amortized capacity, 
although we were able to determine the optimal amortization rate 
in the case of an erasure channel, showing that it is
strictly positive for erasure probability larger than $\frac12$.
More generally, the nature of the tradeoff between achievable 
identification rates and amortization rates is completely unknown.

\section*{Acknowledgments}
We thank Marco Piani for sharing his proof of
Lemma~\ref{lem:norm.bound} with us and Mark Wilde for helpful comments.

\appendix


The following results were used in various proofs but have been collected
here so as not to distract from the main line of argument in the paper.
This first relation provides a convenient way to
calculate mixed state fidelity:
%
%
\begin{lemma} \label{lemnice}
For pure states $\ph,\psi$ on a bipartite system $B\ox E$,
\begin{equation} \label{eq:neat.identity}
    F(\ph^B,\psi^B) = \bigl\| \tr_B \ket{\ph}\!\bra{\psi} \bigr\|_1^2 .
\end{equation}
\end{lemma}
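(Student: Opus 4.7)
The plan is to prove the identity via Uhlmann's theorem together with the variational characterization of the trace norm $\|X\|_1 = \max_U |\tr(UX)|$, where the maximum is over unitaries on the space on which $X$ acts.

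First, I would invoke Uhlmann's theorem to express
\[
    F(\varphi^B,\psi^B) = \max_{|\widetilde\varphi\rangle} |\langle \psi | \widetilde\varphi\rangle|^2,
\]
where the maximum runs over all purifications $|\widetilde\varphi\rangle \in B\otimes E$ of $\varphi^B$. Since any two purifications on the same space differ by a unitary acting on the purifying system, this maximum can be written as
\[
    F(\varphi^B,\psi^B) = \max_{V\in \mathrm{U}(E)} |\langle \psi | (\mathbf{1}_B\otimes V)|\varphi\rangle|^2.
\]
(If $\dim E$ is too small to accommodate all purifications, one first embeds $E$ into a larger space; the argument is unchanged.)

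Next I would rewrite the matrix element as a trace against the operator $X := \tr_B |\varphi\rangle\!\langle\psi| \in \mathcal{L}(E)$. A short index computation (or the partial-trace identity $\tr[(A\otimes B)M] = \tr[B \tr_A M]$ applied to $A = \mathbf{1}_B$) shows that
\[
    \langle \psi | (\mathbf{1}_B \otimes V)|\varphi\rangle = \tr\bigl[(\mathbf{1}_B\otimes V)|\varphi\rangle\!\langle\psi|\bigr] = \tr(V X).
\]
Combining the two displays gives
\[
    F(\varphi^B,\psi^B) = \max_{V\in\mathrm{U}(E)} |\tr(VX)|^2,
\]
and invoking the duality $\|X\|_1 = \max_{V\text{ unitary}} |\tr(VX)|$ yields exactly the claimed equation.

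The main (only) obstacle is the bookkeeping around which system the partial trace acts on and making sure the unitary $V$ ends up on the complementary system $E$, since both the Uhlmann maximization and the trace-norm duality naturally produce unitaries on $E$, not on $B$. Once that is straight, the three ingredients combine in a single line.
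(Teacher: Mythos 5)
Your proposal is correct and is essentially the paper's proof read in the opposite direction: the paper starts from $\bigl\|\tr_B\ket{\ph}\!\bra{\psi}\bigr\|_1$, applies the trace-norm/sup-norm duality (maximum attained at a unitary), the partial-trace identity, and finally Uhlmann's relation, whereas you start from Uhlmann and end with the duality. The ingredients and the bookkeeping (unitary acting on $E$) are identical, so there is nothing to add.
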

\begin{proof}
This is a straightforward calculation:
\begin{equation*}\begin{split}
    \bigl\| \tr_B \ket{\ph}\!\bra{\psi} \bigr\|_1
         &= \max_{\| X \|_\infty \leq 1}
            \left| \tr\left(\tr_B \ket{\ph}\!\bra{\psi}\right)X \right| \\
         &= \max_{U \text{ unitary}}
            \left| \tr\left(\tr_B \ket{\ph}\!\bra{\psi}\right)U \right| \\
         &= \max_{U \text{ unitary}}
            \left| \tr\ket{\ph}\!\bra{\psi}(\1\ox U) \right| \\
         &= \max_{U \text{ unitary}}
            \sqrt{ F\bigl( (\1\ox U)\ph(\1\ox U^\dagger),\psi \bigr) } \\
         &= \sqrt{F(\ph^B,\psi^B)},
\end{split}\end{equation*}
invoking, successively, the duality between trace and sup norm, the
fact that the maximum is always attained at a unitary, the defining
property of the partial trace, and in the last line Uhlmann's
relation~\cite{jozsa:fid,uhlmann:fid}.
\end{proof}

\medskip
The following lemma provides conditions under which mixing preserves
near-orthogonality.
%
%
\begin{lemma} \label{lemmixing.vs.fidelity}
Let $\rho$ and $\sigma_i$, for all $i$, be states on the same
Hilbert space such that there exist projectors $P$ and $Q_i$ of rank
$\leq r$, and $\mu P \leq \rho \leq \lambda P$, $\mu Q_i \leq
\sigma_i \leq \lambda Q_i$ such that $\mu r \leq 1$.
If furthermore for all $i$, $F(\rho,\sigma_i) \leq \epsilon$, then
\[
    F\bigl( \rho,\overline\sigma \bigr) \leq \delta := \epsilon \frac{\l^2}{\m^2}
\]
for every $\overline\sigma = \sum_i p_i\sigma_i$ in the convex hull
of the $\sigma_i$.
\end{lemma}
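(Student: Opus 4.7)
The plan is to reduce the fidelity bound to an estimate on the overlaps $\tr(PQ_i)$ between the support projectors, which behave linearly under convex combinations (unlike the fidelity itself). The sandwich $\mu P \leq \rho \leq \lambda P$, together with operator monotonicity of the square root, lifts to $\sqrt{\mu}P \leq \sqrt{\rho} \leq \sqrt{\lambda}P$ (and analogously for each $\sigma_i$, $Q_i$), so $\sqrt{\rho}$ and $\sqrt{\sigma_i}$ differ from their support projectors only by a bounded spectral factor.

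The workhorse would be the general operator inequality $\|AC\|_1 \leq \|BC\|_1$ whenever $A,B$ are positive semidefinite with $A^2 \leq B^2$, which follows by writing $\|AC\|_1 = \tr\sqrt{C^\dagger A^2 C}$, using $C^\dagger A^2 C \leq C^\dagger B^2 C$, and invoking operator monotonicity of $\sqrt{\cdot}$. Applying this twice from the \emph{lower} bounds (first with $A = \sqrt{\mu}Q_i$, $B = \sqrt{\sigma_i}$, $C = P$, then with $A = \sqrt{\mu}P$, $B = \sqrt{\rho}$, $C = \sqrt{\sigma_i}$) gives
\[
  \mu \|PQ_i\|_1 \leq \|\sqrt{\rho}\sqrt{\sigma_i}\|_1 = \sqrt{F(\rho,\sigma_i)} \leq \sqrt{\epsilon},
\]
so $\tr(PQ_i) \leq \|PQ_i\|_1^2 \leq \epsilon/\mu^2$ for every $i$. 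This is the key estimate: every $Q_i$ has small overlap with $P$.

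For the conclusion, the same argument applied with the \emph{upper} bounds gives $\sqrt{\rho} \leq \sqrt{\lambda}P$ and, using operator monotonicity of $\sqrt{\cdot}$ applied to $\overline\sigma \leq \lambda \sum_i p_i Q_i$, $\sqrt{\overline\sigma} \leq \sqrt{\lambda}\sqrt{\sum_i p_i Q_i}$. Hence
\[
  F(\rho, \overline\sigma) = \|\sqrt{\rho}\sqrt{\overline\sigma}\|_1^2 \leq \lambda^2 \bigl\|P\textstyle\sqrt{\sum_i p_i Q_i}\bigr\|_1^2.
\]
The operator $P\sqrt{\sum_i p_i Q_i}$ has rank $\leq r$, so the Cauchy--Schwarz inequality among Schatten norms, $\|X\|_1^2 \leq \mathrm{rank}(X)\,\|X\|_2^2$, yields $\|P\sqrt{\sum_i p_i Q_i}\|_1^2 \leq r\sum_i p_i \tr(PQ_i)$. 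Plugging in $\tr(PQ_i) \leq \epsilon/\mu^2$ and using the hypothesis $\mu r \leq 1$ closes out the claim.

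The main delicacy is getting the factor structure right: the Cauchy--Schwarz step contributes a stray factor of $r$ which must be absorbed precisely through $\mu r \leq 1$, and it is important to pass through the squared norm bound $\tr(PQ_i) \leq \|PQ_i\|_1^2$ rather than just $\tr(PQ_i) \leq \|PQ_i\|_1$, since the latter would give only $\sqrt{\epsilon}/\mu$. Beyond this bookkeeping, the argument is a routine chain of Schatten-norm manipulations and invocations of operator monotonicity of the square root.
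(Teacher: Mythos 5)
Your argument is clean and correct right up to its last sentence, but the final step does not close. What you have actually proved is $F(\rho,\overline\sigma) \leq \lambda^2\,r\sum_i p_i \tr(PQ_i) \leq \epsilon\, r\,\lambda^2/\mu^2$: the workhorse inequality $\|AC\|_1\leq\|BC\|_1$ for $A^2\leq B^2$, the estimate $\mu\|PQ_i\|_1\leq\sqrt{F(\rho,\sigma_i)}\leq\sqrt{\epsilon}$, the passage to $\tr(PQ_i)=\|Q_iP\|_2^2\leq\|PQ_i\|_1^2\leq\epsilon/\mu^2$, and the Cauchy--Schwarz step $\|X\|_1^2\leq\rank(X)\|X\|_2^2$ are all fine. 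But the hypothesis $\mu r\leq 1$ only lets you replace the stray factor $r$ by $1/\mu$, yielding $F(\rho,\overline\sigma)\leq\epsilon\lambda^2/\mu^3$, which is strictly weaker than the claimed $\epsilon\lambda^2/\mu^2$ (recall $\mu\leq 1/r\leq 1$). There is no way to absorb the factor $r$ completely, so the proof as written does not establish the stated bound.

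You should not be too distressed by this, because the constant in the lemma as printed appears to be unattainable. Take $P=\sum_{i=1}^r\proj{e_i}$, $\rho=P/r$, and $Q_i=\proj{e_i}+R_i$ with $R_1,\dots,R_r$ mutually orthogonal rank-$(r-1)$ projectors orthogonal to $P$; set $\sigma_i=Q_i/r$ and $p_i=1/r$. Then $\mu=\lambda=1/r$, $\mu r=1$, and $F(\rho,\sigma_i)=\|PQ_i\|_1^2/r^2=1/r^2=\epsilon$, yet $\overline\sigma=(P+\sum_iR_i)/r^2$ gives $F(\rho,\overline\sigma)=\|r^{-3/2}P\|_1^2=1/r$, exceeding $\epsilon\lambda^2/\mu^2=1/r^2$. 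The paper's own proof follows a parallel route --- it lower-bounds $F(\rho,\sigma_i)\geq\mu^2(\tr PQ_iP)^2$ and then applies Jensen's inequality to $\tr\sqrt{\sum_ip_iPQ_iP}$ rather than Cauchy--Schwarz to the singular values --- but its penultimate inequality silently upgrades $\mu\tr(PQ_iP)\leq\sqrt{\epsilon}$ to $\mu\tr(PQ_iP)\leq\epsilon$; read correctly it only delivers $F(\rho,\overline\sigma)\leq\sqrt{\epsilon}\,\lambda^2/\mu^2$. Your bound $\epsilon r\lambda^2/\mu^2$ and that corrected bound are incomparable in general, both are saturated by the example above, and either one suffices for the downstream applications in Theorems 8 and 9 after adjusting the exponents. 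So: your route is sound and in some respects sharper than the paper's, but it does not prove the lemma as stated --- and, it seems, neither does the paper.
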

\begin{proof}
We use the definition of the fidelity to first obtain
\[
    \epsilon \geq \left( \tr\sqrt{\sqrt{\rho}\sigma_i\sqrt{\rho}} \right)^2
             \geq \mu^2 \left( \tr PQ_iP \right)^2.
\]
Invoking the definition again, we now get from this
\[\begin{split}
    \sqrt{F\bigl( \rho,\overline\sigma \bigr)}
            = \left\| \sqrt{\rho}\sqrt{\overline\sigma} \right\|_1
                &\leq \lambda \tr\sqrt{\sum_i p_i PQ_iP}                         \\
                &\leq \lambda r \sqrt{\sum_i p_i \frac{1}{\mu r} \mu \tr PQ_iP}  \\
                &\leq \lambda r \sqrt{\frac{\epsilon}{\mu r}}
                 \leq \sqrt{\epsilon}\frac{\lambda}{\mu},
\end{split}\]
using the concavity of the square root twice in turn~\cite{bhatia}.
\end{proof}
%
%
%

%
%
\begin{lemma} \label{lem:fidelity.op.mono}
Let $0 \leq \tilde\r \leq \r$ and $0 \leq \tilde\s \leq \s$. Then
$F(\tilde\r,\tilde\s) \leq F(\r,\s)$.
\end{lemma}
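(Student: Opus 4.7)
The plan is to apply Uhlmann's theorem twice to build purifications of $\r$ and $\s$ whose inner product matches or exceeds that of optimal purifications of $\tilde\r$ and $\tilde\s$. By Uhlmann, I can choose purifications $\ket{\tilde\psi}$ and $\ket{\tilde\ph}$ of $\tilde\r$ and $\tilde\s$ in a common purifying space $R$ satisfying $\braket{\tilde\psi}{\tilde\ph} = \sqrt{F(\tilde\r,\tilde\s)} \geq 0$. Since the hypotheses make $\r - \tilde\r$ and $\s - \tilde\s$ positive semidefinite, a second application of Uhlmann gives purifications $\ket{\omega}$ and $\ket{\omega'}$ of these differences in $R$ with $\braket{\omega}{\omega'} = \sqrt{F(\r-\tilde\r,\s-\tilde\s)} \geq 0$. (The two phase choices are independent and compatible.)

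Next I would enlarge the purifying system by an auxiliary qubit $E$ and glue the two pieces together:
\[
  \ket{\Psi_\r} = \ket{\tilde\psi}\ket{0}^E + \ket{\omega}\ket{1}^E, \qquad
  \ket{\Psi_\s} = \ket{\tilde\ph}\ket{0}^E + \ket{\omega'}\ket{1}^E.
\]
Orthogonality of $\ket{0}^E$ and $\ket{1}^E$ kills the cross terms on tracing out $R$ and $E$, so $\ket{\Psi_\r}$ reduces to $\tilde\r + (\r - \tilde\r) = \r$, and the same computation shows $\ket{\Psi_\s}$ purifies $\s$. Because both contributions to the overlap are non-negative by construction,
\[
  \braket{\Psi_\r}{\Psi_\s} = \sqrt{F(\tilde\r,\tilde\s)} + \sqrt{F(\r-\tilde\r,\s-\tilde\s)} \geq \sqrt{F(\tilde\r,\tilde\s)}.
\]
Since the modulus of the inner product of any two purifications is bounded by $\sqrt{F(\r,\s)}$ (the easy direction of Uhlmann), this gives $\sqrt{F(\r,\s)} \geq \sqrt{F(\tilde\r,\tilde\s)}$, and squaring completes the proof.

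This is more bookkeeping than a hard argument; the only mild subtlety is that Uhlmann's theorem is being used for possibly unnormalized positive operators, which is standard and follows directly from the definition $F(X,Y) = \|\sqrt{X}\sqrt{Y}\|_1^2$. An equally short alternative would go through the Watrous SDP characterization, in which $\sqrt{F(\r,\s)}$ equals the maximum of $\tfrac{1}{2}\tr(X + X^\dg)$ over $X$ such that the $2 \times 2$ block operator with diagonal entries $\r, \s$ and off-diagonal entries $X, X^\dg$ is positive semidefinite: any $X$ feasible for $(\tilde\r,\tilde\s)$ remains feasible for $(\r,\s)$ because $\operatorname{diag}(\r-\tilde\r,\s-\tilde\s) \geq 0$, so the maximum can only grow.
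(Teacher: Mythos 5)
Your proof is correct, but it takes a genuinely different route from the paper's. The paper argues purely algebraically: it chains the operator inequalities and congruences
\begin{equation*}
  \sqrt{\tilde\r}\,\tilde\s\sqrt{\tilde\r} \;\leq\; \sqrt{\tilde\r}\,\s\sqrt{\tilde\r} \;\sim\; \sqrt{\s}\,\tilde\r\sqrt{\s} \;\leq\; \sqrt{\s}\,\r\sqrt{\s} \;\sim\; \sqrt{\r}\,\s\sqrt{\r},
\end{equation*}
using the $XX^\dg \sim X^\dg X$ trick to alternate which argument is being enlarged, and then concludes by operator monotonicity of the square root together with invariance of the trace under the congruences. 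You instead invoke Uhlmann's theorem twice (for unnormalized positive operators, which is fine since the proof of Uhlmann never uses normalization) and glue the optimal purifications of $(\tilde\r,\tilde\s)$ and of the differences $(\r-\tilde\r,\s-\tilde\s)$ along an orthogonal ancilla qubit; the cross terms vanish under the partial trace, the overlaps add, and the easy direction of Uhlmann closes the argument. Your construction actually yields the slightly stronger superadditivity statement $\sqrt{F(\r,\s)} \geq \sqrt{F(\tilde\r,\tilde\s)} + \sqrt{F(\r-\tilde\r,\s-\tilde\s)}$, at the cost of carrying purifying systems around; the paper's proof is shorter and self-contained given operator monotonicity of $\sqrt{\cdot}$, but gives only the bare monotonicity. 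Your SDP remark is also a valid one-line alternative: feasibility for $(\tilde\r,\tilde\s)$ implies feasibility for $(\r,\s)$ since the diagonal perturbation is positive semidefinite.
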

\begin{proof}
Denoting unitary congruence of matrices (in particular
having the same spectrum) by $\sim$, we have
\begin{equation*}
    \sqrt{\tilde\r} \tilde\s \sqrt{\tilde\r}
    \leq \sqrt{\tilde\r} \s \sqrt{\tilde\r}
    \sim \sqrt{\s} \tilde\r \sqrt{\s}
    \leq \sqrt{\s} \r \sqrt{\s}
    \sim \sqrt{\r} \s \sqrt{\r}.
\end{equation*}
Hence, since the square root is operator monotone~\cite{bhatia}
and the trace is invariant under unitary basis change,
$\tr\sqrt{\sqrt{\tilde\r} \tilde\s \sqrt{\tilde\r}} \leq \tr\sqrt{\sqrt{\r} \s \sqrt{\r}}$,
completing the proof.
\end{proof}

\medskip
The next lemma constrains the increase of the maximal output trace
norm when tensoring with a fixed-size identity transformation:
%
%
\begin{lemma} \label{lem:norm.bound}
Let $\G : \cS(A) \rar \cS(B)$ be a linear superoperator. Then for
any $t$ any positive integer,
\begin{equation*}
    \left\| \G \right\|_\diamond^{(t)}
    \leq t \left\| \G \right\|_\diamond^{(1)}.
\end{equation*}
\end{lemma}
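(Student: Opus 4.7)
The plan is to reduce the inequality to the rank-one case via the singular value decomposition. The key intermediate step is a block-norm bound: for any operator $X$ on $\CC^t \ox A$ with block decomposition $X = \sum_{i,j=1}^t \ketbra{i}{j} \ox X_{ij}$, one has
$$\sum_{i,j} \|X_{ij}\|_1 \;\leq\; t\,\|X\|_1.$$

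First I would expand $(\id_t \ox \G)(X) = \sum_{i,j} \ketbra{i}{j} \ox \G(X_{ij})$ and apply the triangle inequality to obtain $\|(\id_t \ox \G)(X)\|_1 \leq \sum_{i,j} \|\G(X_{ij})\|_1$. Since each $X_{ij}$ is an operator on $A$, the definition of $\|\G\|_\diamond^{(1)}$ immediately gives $\|\G(X_{ij})\|_1 \leq \|\G\|_\diamond^{(1)} \|X_{ij}\|_1$. Combining these with the block-norm bound and taking the supremum over $\|X\|_1 \leq 1$ yields the claim.

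To prove the block-norm bound I would decompose $X$ via its SVD as $X = \sum_k s_k \ketbra{\ps_k}{\ph_k}$, where $\ket{\ps_k}, \ket{\ph_k}$ are unit vectors in $\CC^t \ox A$ and $\sum_k s_k = \|X\|_1$. By the triangle inequality applied blockwise, it suffices to verify the inequality for the rank-one case $X = \ketbra{\ps}{\ph}$ with $\ket{\ps}, \ket{\ph}$ unit. Writing $\ket{\ps} = \sum_i \ket{i}\ket{u_i}$ and $\ket{\ph} = \sum_j \ket{j}\ket{v_j}$, one has $X_{ij} = \ketbra{u_i}{v_j}$, so $\|X_{ij}\|_1 = \|u_i\|\cdot\|v_j\|$. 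Cauchy--Schwarz then yields $\sum_i \|u_i\| \leq \sqrt{t}\sqrt{\sum_i \|u_i\|^2} = \sqrt{t}\,\|\ps\| = \sqrt{t}$, and similarly $\sum_j \|v_j\| \leq \sqrt{t}$. Multiplying gives $\sum_{i,j} \|u_i\|\cdot\|v_j\| \leq t$, as required.

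The main substantive step is the block-norm inequality. The crucial gain is the factor $\sqrt{t}$ supplied by Cauchy--Schwarz on each tensor factor, which upgrades the naive bound $\|X_{ij}\|_1 \leq \|X\|_1$ (summing over $t^2$ blocks to give $t^2$) into the sharper factor of $t$. Everything else is bookkeeping with the triangle inequality.
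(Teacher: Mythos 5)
Your proof is correct and is essentially the paper's own argument: both reduce to the rank-one case via the singular value decomposition and then gain the factor $t$ by applying Cauchy--Schwarz to each tensor factor, turning the $\ell_2$-normalization of the two unit vectors into $\ell_1$-bounds of $\sqrt{t}$ apiece. The only cosmetic difference is that you package the estimate as a standalone block-norm inequality in the computational basis of $\CC^t$, whereas the paper works with Schmidt decompositions of the two vectors; the mathematical content is identical.
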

\begin{proof}
Write $X$, an operator on $\CC^t \ox A$ such that $\|X\|_1 \leq 1$,
in its singular value decomposition as $\sum_j s_j \ketbra{v_j}{w_j}
$, with $0 \leq s_j \leq 1$ and $\braket{v_j}{v_k} =
\braket{w_j}{w_k} = \d_{jk}$. By convexity (triangle inequality),
$\left\| \G \right\|_\diamond^{(t)}$ is attained with a
rank-one $X = \ketbra{v}{w}$, and for the following fix
Schmidt decompositions $\ket{v} = \sum_k \a_{k} \ket{e_{k}}\ket{f_{k}}$
and $\ket{w} = \sum_\ell \b_{\ell} \ket{g_{\ell}}\ket{h_{\ell}}$. Then,
\begin{align*}
  \left\| (\id_t \ox \G)X \right\|_1
    &=    \bigl\| (\id_t \ox \G)\ketbra{v}{w} \bigr\|_1 \\
    &\!\!\!\!
     =    \left\| (\id_t \ox \G) \!\left(
        \sum_{k\ell} \a_{k} \b_{\ell} \ketbra{e_{k}}{g_{\ell}}
            \ox \ketbra{f_{k}}{h_{\ell}} \right)\! \right\|_1 \\
    &\!\!\!\!
     \leq \sum_{k\ell} \a_{k} \b_{\ell} \bigl\| (\id_t \ox \G)\left(
        \ketbra{e_{k}}{g_{\ell}}
            \ox \ketbra{f_{k}}{h_{\ell}} \right) \bigr\|_1 \\
    &\!\!\!\!
     = \sum_{k\ell} \a_{k} \b_{\ell} \left\| \G\left(
        \ketbra{f_{k}}{h_{\ell}} \right) \right\|_1
     \leq t \| \G \|_1^{(1)},
\end{align*}
where the first step is just the triangle inequality and the next follows
from the fact that $\|X\|_1 = \sum_j s_j \leq 1$. The final
inequality uses the fact that $\sum_{k=1}^t \a_{k}$ and
$\sum_{l=1}^t \b_{j}$ are both bounded above by $\sqrt{t}$ since
$\| \a \|_2 = \| \b \|_2 = 1$.
\end{proof}

\begin{remark}
The factor $t$ is optimal,
as the example of the matrix transposition shows
where the bound of the lemma becomes an equality.
\end{remark}

%
%
\begin{lemma}[Gentle measurement~\cite{winter:q-strong,ogawa:nagaoka,ogawa:nagaoka2}]
  \label{lemma:gentle}
Let $\rho$ be a state, and $0 \leq X \leq \1$ be an operator on some
Hilbert space, such that $\tr \rho X \geq 1-\epsilon$. Then,
\(
    \bigl\| \rho - \sqrt{X}\rho\sqrt{X} \bigr\|_1 \leq 2\sqrt{\epsilon}.
\)
\end{lemma}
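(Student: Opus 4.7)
The plan is to reduce to the pure-state case by purification, then bound the trace norm by an explicit $2 \times 2$ calculation in the plane spanned by $\ket{\psi}$ and $\sqrt{X}\ket{\psi}$.

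Step 1 (purification). Let $\ket{\Psi}$ be a purification of $\rho$ on $\cH \ox \cH'$ and set $\widetilde{X} = X \ox \1$. Then $0 \leq \widetilde{X} \leq \1$ and $\tr\Psi\widetilde{X} = \tr\rho X \geq 1-\e$. Observing that $\tr_{\cH'}(\sqrt{\widetilde{X}}\Psi\sqrt{\widetilde{X}}) = \sqrt{X}\rho\sqrt{X}$ and that partial trace is trace-norm contractive, it will be enough to establish the pure-state version of the claim and then take the partial trace.

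Step 2 (pure-state case). So suppose $\rho = \proj{\psi}$ with $\bra{\psi}X\ket{\psi} \geq 1-\e$. The operator $A = \proj{\psi} - \sqrt{X}\proj{\psi}\sqrt{X}$ is Hermitian and supported on the (at most two-dimensional) subspace $T = \operatorname{span}(\ket{\psi},\sqrt{X}\ket{\psi})$. Picking an orthonormal basis $\{\ket{\psi},\ket{\psi^\perp}\}$ for $T$ and writing $\sqrt{X}\ket{\psi} = a\ket{\psi} + b\ket{\psi^\perp}$ with $a = \bra{\psi}\sqrt{X}\ket{\psi}$ (which we may take $\geq 0$) makes $A$ an explicit $2\times 2$ matrix whose trace norm equals $\sqrt{\alpha^2 + 4\beta}$, where $\alpha = 1 - \bra{\psi}X\ket{\psi}$ and $\beta = |b|^2 = \bra{\psi}X\ket{\psi} - a^2$.

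Step 3 (the key inequality). The one nontrivial ingredient is the operator inequality $\sqrt{X} \geq X$ whenever $0 \leq X \leq \1$, which follows from the scalar inequality $\sqrt{x} \geq x$ on $[0,1]$ together with the functional calculus. This yields $a \geq \bra{\psi}X\ket{\psi} = 1-\alpha$, and hence
\begin{equation*}
   \beta = 1 - \alpha - a^2 \leq 1 - \alpha - (1-\alpha)^2 = \alpha(1-\alpha) \leq \alpha.
\end{equation*}
Plugging back gives $\|A\|_1 = \sqrt{\alpha^2 + 4\beta} \leq \sqrt{\alpha^2 + 4\alpha} \leq 2\sqrt{\alpha} \leq 2\sqrt{\e}$. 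Combined with Step 1, this completes the proof.

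The main obstacle, if there is one, is Step 3: one needs the right operator-monotonicity fact to connect $a$ and $\bra{\psi}X\ket{\psi}$, since the naive Cauchy--Schwarz bound $a^2 \leq \bra{\psi}X\ket{\psi}$ goes the wrong way and would only yield a bound of order $\sqrt{5\e}$. Using $\sqrt{X}\geq X$ is exactly what forces the optimal constant.
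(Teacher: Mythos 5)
The paper does not actually prove this lemma --- it is stated with a \verb|\qed| and attributed to the literature --- so there is no in-paper argument to compare against. Your proposal is a legitimate self-contained proof and the overall strategy (purify, reduce to a rank-one state, compute the trace norm of a $2\times 2$ Hermitian matrix from its trace $\alpha$ and determinant $-\beta$, and use $\sqrt{X}\geq X$ to get the sharp constant) is sound. Step 1 is fine since $\sqrt{X\ox\1}=\sqrt{X}\ox\1$ and the partial trace is trace-norm contractive; Step 2 is fine since $\tr A=\alpha$, $\det A=-\beta\leq 0$, so $\|A\|_1=\sqrt{\alpha^2+4\beta}$; and the observation that $\sqrt{X}\ge X$ (both being functions of the same operator, no operator monotonicity is even needed) correctly gives $a\ge 1-\alpha$ and hence $\beta\le\alpha(1-\alpha)$.

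The one error is the final chain of inequalities: $\sqrt{\alpha^2+4\alpha}\leq 2\sqrt{\alpha}$ is false for every $\alpha>0$ (it would require $\alpha^2\leq 0$). You created the problem yourself by discarding the factor $(1-\alpha)$ one line earlier. Keep the bound $\beta\leq\alpha(1-\alpha)$ that you already established and plug that in instead:
\begin{equation*}
  \|A\|_1=\sqrt{\alpha^2+4\beta}\leq\sqrt{\alpha^2+4\alpha(1-\alpha)}=\sqrt{4\alpha-3\alpha^2}\leq 2\sqrt{\alpha}\leq 2\sqrt{\epsilon},
\end{equation*}
which is exactly the claimed bound. With that one-line repair the proof is complete; your closing remark is also correct that the Cauchy--Schwarz direction $a^2\leq\bra{\psi}X\ket{\psi}$ is useless here and that $\sqrt{X}\geq X$ is the ingredient that yields the constant $2$.
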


The following, final, lemma is used to argue that the small eigenvalues of a
density operator can be discarded without causing much disturbance.
%
%
\begin{lemma} \label{lem:little.eig}
Let $(p_1,p_2,\ldots,p_r)$ be a probability density with $p_i \geq
p_{i+1}$ for all $i$ and let $\chi = \{ i ; p_i \leq D/r \}$ for some
$0 \leq D \leq 1$. Then,
$\sum_{i \in \chi} p_i \leq D$.
\end{lemma}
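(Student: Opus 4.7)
The plan is essentially a one-line counting argument, so I would not expect any obstacles. By the very definition of $\chi$, each index $i \in \chi$ satisfies $p_i \leq D/r$. Summing this bound over $i \in \chi$ and noting that $|\chi| \leq r$ gives
\begin{equation*}
    \sum_{i \in \chi} p_i \;\leq\; \sum_{i \in \chi} \frac{D}{r} \;=\; \frac{|\chi|\, D}{r} \;\leq\; \frac{r \cdot D}{r} \;=\; D,
\end{equation*}
which is the claim.

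Two small remarks on the ingredients. First, the monotonicity hypothesis $p_i \geq p_{i+1}$ plays no role in the inequality itself; it is only a convenient bookkeeping device that ensures $\chi$ consists of a tail $\{i_0, i_0+1, \ldots, r\}$ of indices, which is the form in which the lemma is invoked when truncating the small eigenvalues of a density operator in the proof of Theorem~\ref{thm:typicality}. Second, the fact that the $p_i$ form a probability density is likewise not used here; only the non-negativity $p_i \geq 0$ matters for the summation step, and the normalization $\sum_i p_i = 1$ is never invoked. Thus the lemma holds more generally for any non-negative sequence of length $r$, and the proof sketched above is the cleanest way to present it.
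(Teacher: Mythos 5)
Your proof is correct and is exactly the paper's own argument: bound each $p_i$ in $\chi$ by $D/r$ and use $|\chi| \leq r$. Your side remarks that neither the monotonicity nor the normalization is actually needed are accurate observations but do not change the substance.
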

\begin{proof}
Since evidently $|\chi| \leq r$,
\[
  \sum_{i \in \chi} p_i \leq |\chi|\frac{D}{r} \leq r\frac{D}{r} = D,
\]
and that's it.
\end{proof}

\bibliographystyle{unsrt}

\begin{biography}{Patrick Hayden}
(MÕ04) received his doctorate from the University of Oxford 
in 2001 and was subsequently a postdoctoral fellow at the California 
Institute of Technlogy (Caltech) until 2004. 
He is an associate professor in McGill UniversityÕs School of Computer 
Science and a Distinguished Research Chair of the Perimeter Institute for
Theoretical Physics. His research focuses on quantum information theory 
and its applications to other areas of physics and computer science. 
\end{biography}

\begin{biography}{Andreas Winter}
received the Diploma degree in Mathematics from the Freie 
Universit\"at Berlin, Berlin, Germany, in 1997, and the Ph.D. degree from the 
Fakult\"at f\"ur Mathematik, Universit\"at Bielefeld, Bielefeld, Germany, in 1999. 
He was Research Associate at the University of Bielefeld until 2001, and 
since then, he has been with the University of Bristol, Bristol, U.K., initially 
as Research Associate in computer science, from 2003 as Lecturer in 
Mathematics, and since 2006 as Professor of Physics of Information. In 2007, he also 
became affiliated with the Centre for Quantum Technologies, National Univer- 
sity of Singapore.
\end{biography}

\end{document}